\newcommand*\widefbox[1]{\fbox{\hspace{2em}#1\hspace{2em}}}
\begin{document}
	
\copyrightyear{2019} 
\acmYear{2019} 
\setcopyright{acmcopyright}
\acmConference[KDD '19]{The 25th ACM SIGKDD Conference on Knowledge Discovery and Data Mining}{August 4--8, 2019}{Anchorage, AK, USA}
\acmBooktitle{The 25th ACM SIGKDD Conference on Knowledge Discovery and Data Mining (KDD '19), August 4--8, 2019, Anchorage, AK, USA}
\acmPrice{15.00}
\acmDOI{10.1145/3292500.3330946}
\acmISBN{978-1-4503-6201-6/19/08}

\settopmatter{printacmref=true}
\fancyhead{}

\title{Fast and Accurate Anomaly Detection in Dynamic Graphs \\ with a Two-Pronged Approach}

\author{
	Minji Yoon\footnotemark[1],~
	Bryan Hooi\footnotemark[2],~
	Kijung Shin\footnotemark[3],~ 
	Christos Faloutsos\footnotemark[1]
}
\affiliation{
	\institution{\footnotemark[1]~School of Computer Science, Carnegie Mellon University, Pittsburgh, PA, USA} 
	\institution{\footnotemark[2]~School of Computing, National University of Singapore, Singapore}
	\institution{\footnotemark[3]~School of Electrical Engineering, KAIST, South Korea}
	\institution{minjiy@cs.cmu.edu, bhooi@andrew.cmu.edu, kijungs@kaist.ac.kr, christos@cs.cmu.edu}
}

\newtheorem{lemm}{Lemma}
\newtheorem{defi}{Definition}
\newtheorem{theo}{Theorem}
\newtheorem{assumption}{Assumption}
\newtheorem{observation}{Observation}
\newtheorem{problem}{Problem}
\newtheorem{algo}{Algorithm}
\newtheorem{property}{Property}
\renewcommand{\qed}{$\hfill \blacksquare$}

\newcommand{\reminder}[1]{{{\textcolor{red}{\bf [#1]}}}}
\newcommand{\hide}[1]{}

\newcommand{\method}{\textsc{AnomRank}\xspace}

\newcommand{\prS}{\textsc{ScoreS}\xspace}
\newcommand{\prE}{\textsc{ScoreW}\xspace}
\newcommand{\prSE}{\textsc{ScoreS/W}\xspace}

\newcommand{\anS}{\textsc{AnomRankS}\xspace}
\newcommand{\anE}{\textsc{AnomRankW}\xspace}
\newcommand{\anEF}{\textsc{AnomRankW-1st}\xspace}
\newcommand{\anES}{\textsc{AnomRankW-2nd}\xspace}

\newcommand{\structure}{\textsc{AnomalyS}\xspace}
\newcommand{\edge}{\textsc{AnomalyW}\xspace}

\newcommand{\sceneS}{\textsc{InjectionS}\xspace}
\newcommand{\sceneE}{\textsc{InjectionW}\xspace}

\newcommand{\mat}[1]{\mathbf{#1}}
\newcommand{\set}[1]{\mathbf{#1}}
\newcommand{\vect}[1]{\mathbf{#1}}

\newcommand{\p}{\vect{p}} 
\newcommand{\ps}{\vect{p}_s} 
\newcommand{\pe}{\vect{p}_w} 

\renewcommand{\a}{\vect{a}}
\newcommand{\as}{\vect{a}_s}
\renewcommand{\ae}{\vect{a}_w}

\newcommand{\dt}{\Delta t}
\newcommand{\dm}{\frac{\Delta m}{\dt}}
\newcommand{\dmm}{\frac{\Delta^2 m}{\dt^2}}
\newcommand{\psone}{\ps'}
\newcommand{\peone}{\pe'}
\newcommand{\pstwo}{\ps''}
\newcommand{\petwo}{\pe''}

\newcommand{\pso}{\vect{p}_s^o} 
\newcommand{\psn}{\vect{p}_s^n} 
\newcommand{\peo}{\vect{p}_w^o} 
\newcommand{\pen}{\vect{p}_w^n} 

\renewcommand{\b}{\vect{b}} 
\newcommand{\bs}{\vect{b}_s} 
\newcommand{\be}{\vect{b}_w} 
\newcommand{\bd}{\Delta\vect{b}} 
\newcommand{\bed}{\Delta\vect{b}_w} 
\newcommand{\bedn}{\Delta\vect{b}_{w_n}} 
\newcommand{\bedo}{\Delta\vect{b}_{w_o}} 

\newcommand{\A}{\mat{A}} 
\newcommand{\NA}{\mat{\tilde{A}}} 
\newcommand{\NAT}{\mat{\tilde{A}}^{\top}} 
\newcommand{\B}{\mat{B}} 
\newcommand{\NB}{\mat{\tilde{B}}} 
\newcommand{\NBT}{\mat{\tilde{B}}^{\top}} 
\newcommand{\DA}{\Delta\mat{A}} 

\newcommand{\AS}{\mat{A}_s} 
\newcommand{\NAS}{\mat{\tilde{A}}_s} 
\newcommand{\NATS}{\mat{\tilde{A}}_s^{\top}} 
\newcommand{\BS}{\mat{B}_s} 
\newcommand{\NBS}{\mat{\tilde{B}}_s} 
\newcommand{\NBTS}{\mat{\tilde{B}}_s^{\top}} 
\newcommand{\DAS}{\Delta\mat{A}_s} 

\renewcommand{\AE}{\mat{A}_w} 
\newcommand{\NAE}{\mat{\tilde{A}}_w} 
\newcommand{\NATE}{\mat{\tilde{A}}_w^{\top}} 
\newcommand{\BE}{\mat{B}_w} 
\newcommand{\NBE}{\mat{\tilde{B}}_w} 
\newcommand{\NBTE}{\mat{\tilde{B}}_w^{\top}} 
\newcommand{\DAE}{\Delta\mat{A}_w} 

\newcommand{\DASN}{\Delta\mat{A}_{s_n}} 
\newcommand{\DASO}{\Delta\mat{A}_{s_o}}
\newcommand{\DAEN}{\Delta\mat{A}_{w_n}} 
\newcommand{\DAEO}{\Delta\mat{A}_{w_o}}

\newcommand{\note}[1]{\textcolor{red}{#1}}
\renewcommand{\comment}[1]{\textcolor{blue}{#1}}

\newcommand{\oddball}{\textsc{OddBall}\xspace}
\newcommand{\scan}{\textsc{SCAN}\xspace}
\newcommand{\autopart}{\textsc{Autopart}\xspace}
\newcommand{\nnrmf}{\textsc{NNRMF}\xspace}
\newcommand{\fraudar}{\textsc{Fraudar}\xspace}
\newcommand{\coreA}{\textsc{Core-A}\xspace}
\newcommand{\catchsync}{\textsc{CatchSync}\xspace}
\newcommand{\copycatch}{\textsc{CopyCatch}\xspace}
\newcommand{\crossspot}{\textsc{CrossSpot}\xspace}
\newcommand{\mzoom}{\textsc{M-Zoom}\xspace}
\newcommand{\densealert}{\textsc{DenseAlert}\xspace}
\newcommand{\sedanspot}{\textsc{SedanSpot}\xspace}
\newcommand{\spotlight}{\textsc{Spotlight}\xspace}

\newcommand{\codeurl}{\footnote{\url{https://github.com/minjiyoon/anomrank}}}

\begin{abstract}
Given a dynamic graph stream, how can we detect the sudden appearance of anomalous patterns, such as link spam, follower boosting, or denial of service attacks? Additionally, can we categorize the types of anomalies that occur in practice, and theoretically analyze the anomalous signs arising from each type?

In this work, we propose \method, an online algorithm for anomaly detection in dynamic graphs.
\method uses a two-pronged approach defining two novel metrics for anomalousness.
Each metric tracks the derivatives of its own version of a `node score' (or node importance) function.
This allows us to detect sudden changes in the importance of any node.
We show theoretically and experimentally that the two-pronged approach successfully detects two common types of anomalies: sudden weight changes along an edge, and sudden structural changes to the graph. 
\method is {\bf (a) Fast and Accurate:} up to {\it 49.5$\times$ faster} or {\it 35\% more accurate} than state-of-the-art methods,
{\bf (b) Scalable:} linear in the number of edges in the input graph, processing
millions of edges within 2 seconds on a stock laptop/desktop, and
{\bf (c) Theoretically Sound:} providing theoretical guarantees of the two-pronged approach.

\end{abstract}

%
%



\maketitle

\section{Introduction}
\label{sec:introduction}
Network-based systems including computer networks and social network services have been a focus of various attacks.
In computer networks, distributed denial of service (DDOS) attacks use a number of machines to make connections to a target machine to block their availability.
In social networks, users pay spammers to "Like" or "Follow" their page to manipulate their public trust.
By abstracting those networks to a graph, we can detect those attacks by finding suddenly emerging anomalous signs in the graph. 

Various approaches have been proposed to detect anomalies in graphs, the majority of which focus on static graphs~\cite{akoglu2010oddball,chakrabarti2004autopart,hooi2017graph,jiang2016catching,kleinberg1999authoritative,shin2018patterns,tong2011non}.
However, many real-world graphs are dynamic, with timestamps indicating the time when each edge was inserted/deleted.
Static anomaly detection methods, which are solely based on static connections, miss useful temporal signals of anomalies.

Several approaches~\cite{eswaran2018spotlight,shin2017densealert,eswaran2018sedanspot} have been proposed to detect anomalies on dynamic graphs (we review these in greater detail in Section \ref{sec:related_works} and Table \ref{tab:salesman}).
However, they are not satisfactory in terms of accuracy and speed.
Accurately detecting anomalies in near real-time is important in order to cut back the impact of malicious activities and start recovery processes in a timely manner.

\begin{figure*}[!t]
	\centering
	\vspace{-5mm}
	\subfigure[Speed and precision of \method]
	{
		\label{fig:perf:running_time}
		\includegraphics[width=.3\linewidth]{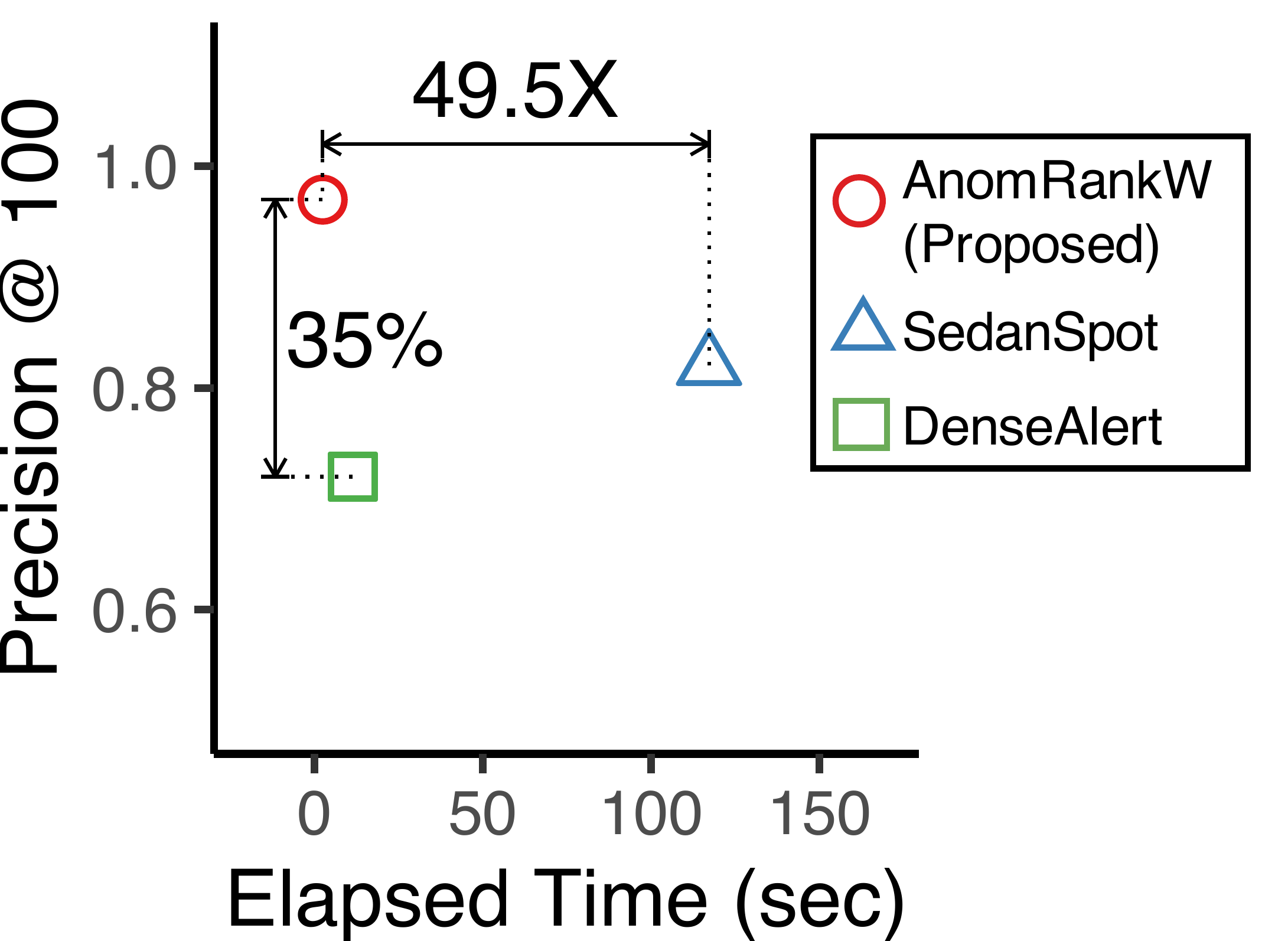}	
	}
	\subfigure[Scalability of \method]
	{
		\label{fig:perf:scalability}
		\includegraphics[width=.21\linewidth]{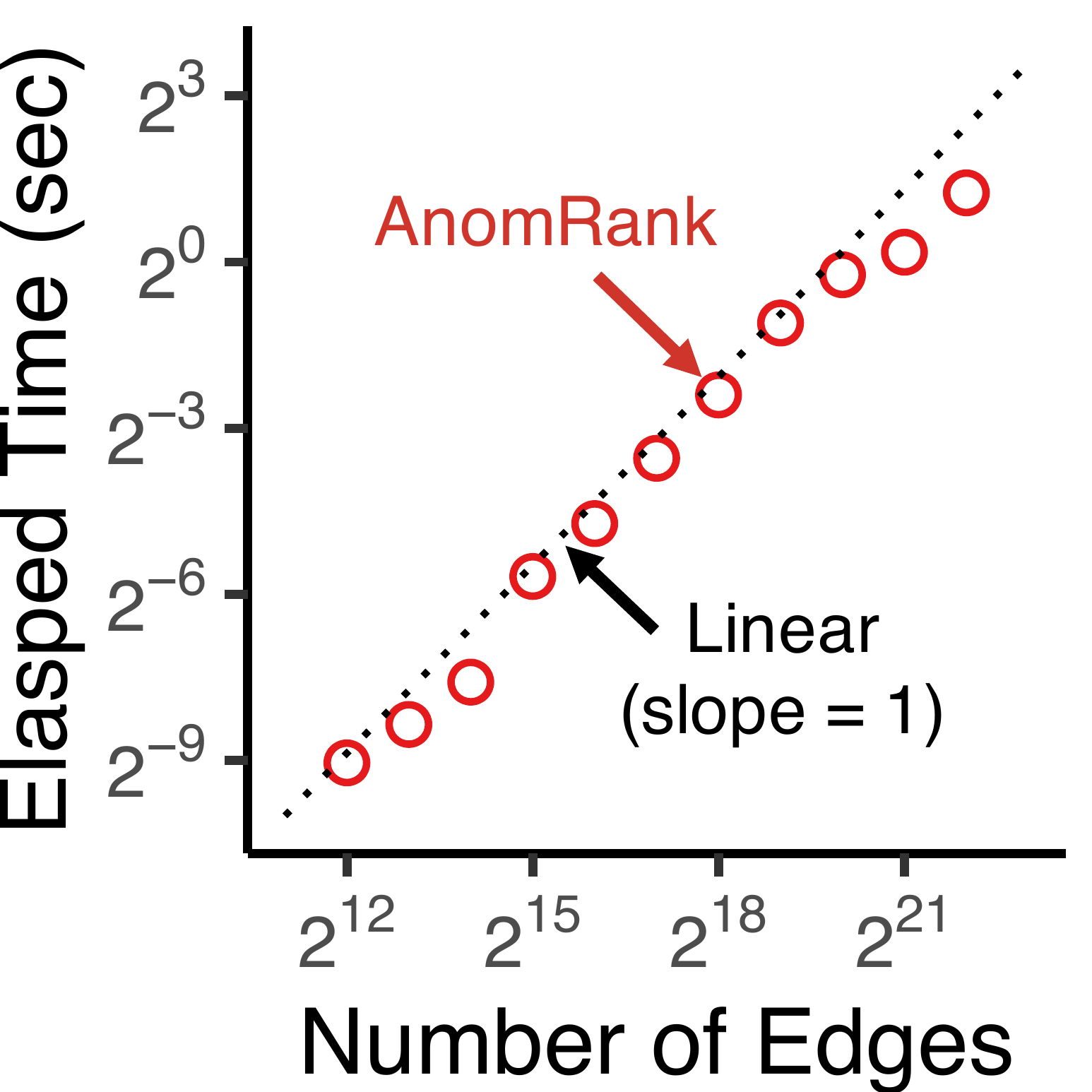}	
	}
	\subfigure[Effectiveness of \method]
	{
		\label{fig:perf:anomaly_score}
		\includegraphics[width=.4\linewidth]{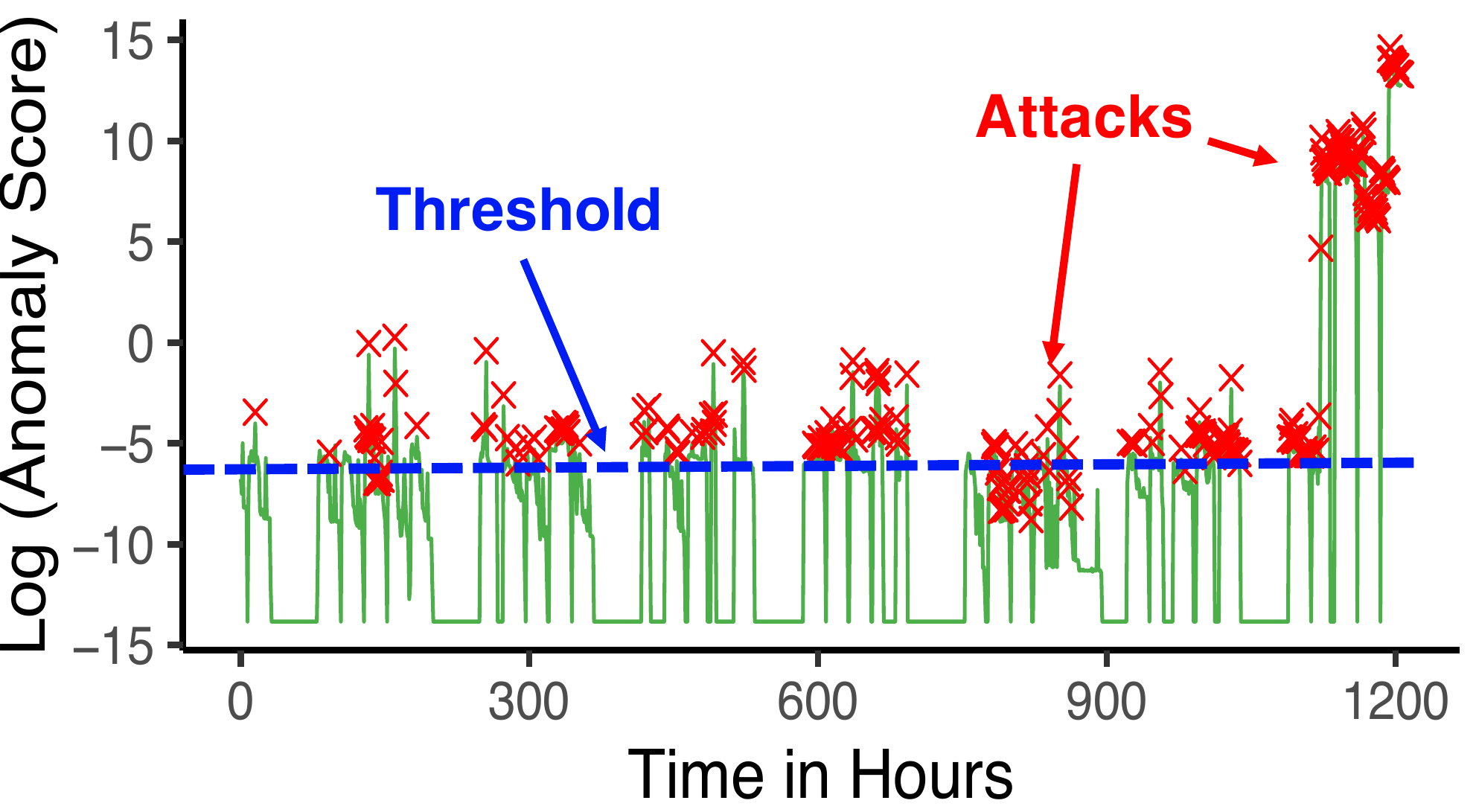}	
	}
	\caption
	{ \underline{\smash{\method is accurate, fast, and scalable}}:
		(a) Precision at top-100 and running time on the DARPA dataset:		
		\method is significantly faster than state-of-the-art methods while achieving higher accuracy.
		(b)	\method scales linearly with the number of edges in the input dynamic graph.
		(c) \method computes anomaly scores (green line) on the DARPA dataset. 
		Red crosses indicate ground truth anomalies; 
		the blue line is an anomalousness threshold.
		$77$\% of green spikes above the blue line are true positives;
		see Section~\ref{sec:experiments} for details.
	}
	\label{fig:perf}
\end{figure*}

In this paper, we propose \method, a fast and accurate online algorithm for detecting anomalies in dynamic graphs with a two-pronged approach.
We classify anomalies in dynamic graphs into two types: \structure and \edge.
\structure denotes suspicious changes to the \emph{structure} of the graph, such as through the addition of edges between previously unrelated nodes in spam attacks.
\edge indicates anomalous changes in the \emph{weight} (i.e. number of edges) between connected nodes, such as suspiciously frequent connections in port scan attacks.

Various node score functions have been proposed to map each node of a graph to an importance score:
PageRank~\cite{page1999pagerank}, HITS and its derivatives (SALSA)~\cite{kleinberg1999authoritative, DBLP:journals/tois/LempelM01}, Fiedler vector~\cite{chungspectral}, etc.
Our intuition is that anomalies induce sudden changes in node scores. 
Based on this intuition, \method focuses on the 1st and 2nd order derivatives of node scores to detect anomalies with large changes in node scores within a short time.
To detect \structure and \edge effectively, we design two versions of node score functions based on characteristics of these two types of anomalies.
Then, we define two novel metrics, \anS and \anE, which measure the 1st and 2nd order derivatives of our two versions of node score functions, as a barometer of anomalousness.
We theoretically analyze the effectiveness of each metric on the corresponding anomaly type, and provide rigid guarantees on the accuracy of \method. 
Through extensive experiments with real-world and synthetic graphs, we demonstrate the superior performance of \method over existing methods.
Our main contributions are:
\begin{itemize}[leftmargin=10pt]
	\item 
	{\textbf{Online, two-pronged approach:}
		We introduce \method, an online detection method, for two types of anomalies
		(\structure, \edge) in dynamic graphs.
	}
	\item {
	\textbf{Theoretical guarantees:}
		We prove the effectiveness of our proposed metrics, \anS and \anE, theoretically (Theorems \ref{theorem:up_d1d2_s} and \ref{theorem:up_d1d2_e}).
	}
	\item {
	\textbf{Practicality:}
		Experiments on public benchmarks show that \method outperforms state-of-the-art competitors, being up to \textit{$49.5 \times$} faster or \textit{$35\%$} more accurate (Figure \ref{fig:perf}).
		Moreover, thanks to its two-pronged approach, it spots anomalies that the competitors miss (Figure \ref{fig:enron}).
	}
\end{itemize}

{\bf Reproducibility}: our code and data are publicly available\codeurl.
The paper is organized in the usual way (related work, preliminaries, proposed method, experiments, and conclusions).

\section{Related Work}
\label{sec:related_works}
We discuss previous work on detecting anomalous entities (nodes, edges, events, etc.) on static and dynamic graphs.
See \cite{akoglu2015graph} for an extensive survey on graph-based anomaly detection.

{\bf Anomaly detection in static graphs} can be described under the following categories:
\begin{itemize}[leftmargin=10pt]
	\item {\it Anomalous Node Detection}: \cite{akoglu2010oddball} extracts egonet-based features and finds empirical patterns with respect to the features. Then, it identifies nodes whose egonets deviate from the patterns. 
	\cite{xu2007scan} groups nodes that share many neighbors and spots nodes that cannot be assigned to any community.
	\item {\it Anomalous Edge Detection}: \cite{chakrabarti2004autopart} encodes the input graph based on similar connectivity between nodes, then spots edges whose removal significantly reduces the total encoding cost.
	\cite{tong2011non} factorizes the adjacency matrix and flags edges which introduce high reconstruction error as outliers. 
	\item {\it Anomalous Subgraph Detection}: \cite{hooi2017graph} and \cite{shin2018patterns} measure the anomalousness of nodes and edges, then find a dense subgraph consisting of many anomalous nodes and edges. 
\end{itemize}

\noindent{\bf Anomaly detection in dynamic graphs} can also be described under the following categories:
\begin{itemize}[leftmargin=10pt]
	\item {\it Anomalous Node Detection:} 
	\cite{sun2006beyond} approximates the adjacency matrix of the current snapshot based on incremental matrix factorization.
	Then, it spots nodes corresponding to rows with high reconstruction error.
	\cite{wang2015localizing} computes nodes features (degree, closeness centrality, etc) in each graph snapshot.	
	Then, it identifies nodes whose features are notably different from their previous values and the features of nodes in the same community.
	\item {\it Anomalous Edge Detection:} 
	\cite{eswaran2018sedanspot} detects edges that connect sparsely-connected parts of a graph.
	\cite{ranshous2016scalable} spots edge anomalies based on their occurrence, preferential attachment and mutual neighbors.
	\item {\it Anomalous Subgraph Detection}: 
	\cite{beutel2013copycatch} spots near-bipartite cores where each node is connected to others in the same core densly within a short time. 
	\cite{jiang2016catching} and \cite{shin2018fast} detect groups of nodes who form dense subgraphs in a temporally synchronized manner.
	\cite{shin2017densealert} identifies dense subtensors created within a short time.
	\item {\it Event Detection}:	
	\cite{eswaran2018spotlight,aggarwal2011outlier,koutra2016deltacon, henderson2010metric} detect the following events: sudden appearance of many unexpected edges \cite{aggarwal2011outlier},
	sudden appearance of a dense graph \cite{eswaran2018spotlight}, sudden drop in the similarity between two consecutive snapshots \cite{koutra2016deltacon}, and sudden prolonged spikes and lightweight stars~\cite{henderson2010metric}. 
\end{itemize}

Our proposed \method is an anomalous event detection method with fast speed and high accuracy. 
It can be easily extended to localize culprits of anomalies into nodes and substructures (Section~\ref{sec:algorithm}), and it detects various types of anomalies in dynamic graphs in a real-time.
Table \ref{tab:salesman} compares \method to existing methods. 

\begin{table}[!htb]
	\centering
	\caption{
		\underline{\smash{\method out-features competitors}}: comparison of our proposed \method and existing methods for anomaly detection in dynamic graphs.
	}
	\label{tab:salesman}
	\begin{tabular}{l|llllll|l}
		\hline
		\diagbox[width=3cm, height=1.8cm]{Property}{Method}&\rotatebox[origin=c]{90}{Oddball~\cite{akoglu2010oddball}} &
		\rotatebox[origin=c]{90}{MetricFor.~\cite{henderson2010metric}} & 
		\rotatebox[origin=c]{90}{CC, CS\cite{beutel2013copycatch,jiang2016catching}} &
		\rotatebox[origin=c]{90}{DenseAlert~\cite{shin2017densealert}} &
		\rotatebox[origin=c]{90}{SpotLight~\cite{eswaran2018spotlight}} &
		\rotatebox[origin=c]{90}{SedanSpot~\cite{eswaran2018sedanspot}} & 
		\rotatebox[origin=c]{90}{~\textbf{\method}~} \\ \hline
		Real-time detection$^*$ & & & & \checkmark & & & \checkmark\\
		Allow edge deletions & & & & \checkmark & \checkmark & & \checkmark\\
		Structural anomalies & \checkmark &\checkmark & & & & & \checkmark\\
		Edge weight anomalies & \checkmark & \checkmark & \checkmark & \checkmark & \checkmark & \checkmark & \checkmark\\ \hline
	\end{tabular}
\begin{tablenotes}
	\small
	\item $^*$compute $1$M edges within $5$ seconds.
\end{tablenotes}
\vspace{-1mm}
\end{table}

\section{Preliminaries}
\label{sec:preliminaries}
Table~\ref{tab:symbols} gives a list of symbols and definitions.

Various node score functions have been designed to estimate importance (centrality, etc.) of nodes in a graph:
PageRank~\cite{page1999pagerank};
HITS~\cite{kleinberg1999authoritative}
and its derivatives (SALSA)~\cite{DBLP:journals/tois/LempelM01};
Fiedler vector~\cite{chungspectral};
all the centrality measures from social network analysis (eigenvector-, degree-, betweeness-centrality~\cite{wasserman1994social}).
Among them, we extend PageRank to design our node score functions in Section~\ref{sec:two_pr} because 
(a) it is fast to compute,
(b) it led to the ultra-successful ranking
method of Google,
(c) it is intuitive (`your importance depends on the importance of your neighbors').

Next, we briefly review PageRank and its incremental version in dynamic graphs.

\vspace{-2mm}
\paragraph{\bf PageRank}\label{sec:pagerank}
As shown in \cite{TPAYoonJK18}, PageRank scores for all nodes are represented as a PageRank score vector $\p$ which is defined by the following equation:
\small
\begin{equation*}
\p = (1-c)\sum_{i=0}^{\infty}(c\NAT)^i\b
\end{equation*}
\normalsize
where $c$ is a damping factor, $\NA$ is the row-normalized adjacency matrix, and $\b$ is the starting vector.
This equation is interpreted as a propagation of scores across a graph: 
initial scores in the starting vector $\b$ are propagated across the graph by multiplying with $\NAT$; 
since the damping factor $c$ is smaller than $1$, propagated scores converge, resulting in PageRank scores. 
As shown in \cite{yoon2018fast}, PageRank computation time is proportional to the L1 length of the starting vector $\b$, since small L1 length of $\b$ leads to faster convergence of iteration ($\sum_{i=0}^{\infty}(c\NAT)^i$) with damping factor $c$.
Here L1 length of a vector is defined as the sum of absolute values of its entries. 
The L1 length of a matrix is defined as the maximum L1 length of its columns. 
\vspace{-2mm}
\paragraph{\bf Incremental PageRank}\label{sec:incremental}
When edges are inserted or deleted, PageRank scores can be updated incrementally from the previous PageRank scores.
Let $\NA$ be the row-normalized adjacency matrix of a graph $G$ and $\NB$ be the row-normalized adjacency matrix after a change $\Delta G$ happened during $\dt$.
From now on, denote $\DA = \NBT - \NAT$, the difference between transpose of normalized matrices $\NAT$ and $\NBT$.
\begin{lemm}[Dynamic PageRank, Theorem 3.2 in~\cite{yoon2018fast}]
	\label{lemma:dynamic_ps}
	Given updates $\DA$ in a graph during $\dt$, an updated PageRank vector $\p(t+\dt)$ is computed incrementally from a previous PageRank vector $\p(t)$ as follows:
	\small
	\begin{equation*}
	\label{equ:update_ps}
	\p(t+\dt) = \p(t) + \sum_{k=0}^{\infty}(c(\NAT+\DA))^{k}c\DA\p(t)
	\end{equation*}
	\normalsize
\end{lemm}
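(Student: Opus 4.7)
The plan is to work from the fixed-point characterization of PageRank and derive the update by algebraic manipulation, rather than directly manipulating the infinite series in the definition. Starting from $\p = (1-c)\sum_{i=0}^{\infty}(c\NAT)^i \b$, I would first observe the standard identity $\p = c\NAT\p + (1-c)\b$, obtained by splitting off the $i=0$ term from the geometric series. This fixed-point form holds both before the change (with $\NAT$) and after it (with $\NBT = \NAT + \DA$), using the \emph{same} personalization vector $\b$.

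Next, I would write down both equations,
\begin{equation*}
\p(t) = c\NAT\p(t) + (1-c)\b, \qquad \p(t+\dt) = c(\NAT+\DA)\p(t+\dt) + (1-c)\b,
\end{equation*}
subtract them, and let $\Delta\p := \p(t+\dt) - \p(t)$. Routine rearrangement, with the trick of rewriting $c(\NAT+\DA)\p(t+\dt)$ as $c(\NAT+\DA)\p(t) + c(\NAT+\DA)\Delta\p$, collapses to
\begin{equation*}
\bigl(I - c(\NAT+\DA)\bigr)\Delta\p = c\DA\,\p(t).
\end{equation*}

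Then I would invert $I - c(\NAT+\DA)$ as a Neumann series $\sum_{k=0}^{\infty}\bigl(c(\NAT+\DA)\bigr)^k$, obtaining the claimed formula $\p(t+\dt) = \p(t) + \sum_{k=0}^{\infty}\bigl(c(\NAT+\DA)\bigr)^k c\DA\,\p(t)$.

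The main obstacle is not the algebra but justifying the Neumann expansion. I would argue convergence by noting that $\NBT = \NAT+\DA$ is the transpose of a row-stochastic matrix, so its induced $\ell_1$ operator norm (taken on columns, consistent with the convention stated in the paper) is bounded by $1$; hence $\|c(\NAT+\DA)\| \le c < 1$, making $I - c(\NAT+\DA)$ invertible with the series converging absolutely. A minor care point is dangling nodes: if the graph may contain zero-out-degree vertices, one should either assume the standard patch (uniform teleportation on dangling rows) so that $\NBT$ remains genuinely stochastic, or replace the bound by the spectral-radius argument $\rho(c\NBT) \le c < 1$; either route suffices and neither alters the final identity.
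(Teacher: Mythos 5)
Your proposal is correct. Note, though, that the paper does not prove this lemma at all: it imports it verbatim as Theorem 3.2 of the cited dynamic-PageRank work, and only uses it as a black box (e.g.\ in the proof of the dynamic \prE lemma, which manipulates the series directly and then invokes this result). Your derivation is therefore a self-contained replacement: passing to the fixed-point form $\p = c\NAT\p + (1-c)\b$, subtracting the before/after equations, isolating $\bigl(I - c(\NAT+\DA)\bigr)\Delta\p = c\DA\,\p(t)$, and expanding the resolvent as the Neumann series $\sum_{k=0}^{\infty}\bigl(c(\NAT+\DA)\bigr)^k$ is exactly the standard route, and it reproduces the stated formula with the same convention $\DA = \NBT - \NAT$ and an unchanged starting vector $\b$ (the case of a changing $\b$ is what Lemma~\ref{lemma:dynamic_pe} adds on top). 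Your convergence argument is also consistent with the paper's implicit assumptions: the supplement repeatedly asserts that $\NAT+\DA$ is a column-normalized stochastic matrix, so $\lVert c(\NAT+\DA)\rVert_1 \le c < 1$ and the series converges; your remark about dangling nodes (uniform-teleportation patch or a spectral-radius bound) addresses the one edge case the paper glosses over. What the fixed-point route buys you over manipulating the defining series directly is that the telescoping/rearrangement of infinite sums is replaced by a single linear-system identity plus one justified Neumann expansion, which is both shorter and easier to audit.
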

\noindent Note that, for small changes in a graph, the L1 length of the starting vector $c\DA\p(t)$ is much smaller than $1$, the L1 length of the starting vector $\b$ in the static PageRank equation, resulting in much faster convergence.


\begin{table}[!t]
	\vspace{-5mm}
	\centering
	\small
	\caption{Table of symbols.}
	\begin{tabular}{cl}
		\toprule
		\textbf{Symbol} & \textbf{Definition} \\
		\midrule
		$G$ & {\small (un)directed and (un)weighted input graph} \\
		$\Delta G$ & {\small update in graph}\\
		$n,m$ & {\small numbers of nodes and edges in $G$}\\
		$\NA$ & {\small ($n \times n$) row-normalized adjacency matrix of $G$}\\
		$\NB$ & {\small ($n \times n$) row-normalized adjacency matrix of $G+\Delta G$}\\
		$\Delta \A$ & {\small($n \times n$) difference between $\NAT$ and $\NBT$ ($=\NBT-\NAT$)}\\	
		$c$ & {\small damping factor of PageRank}\\
		$\bs$ & {\small ($n\times1$) uniform starting vector}\\
		$\be$ & {\small ($n\times1$) out-edge proportional starting vector}\\
		$\NAS$ & {\small ($n \times n$) row-normalized unweighted adjacency matrix}\\
		$\NAE$ & {\small ($n \times n$) row-normalized weighted adjacency matrix}\\
		\bottomrule
	\end{tabular}
	\label{tab:symbols}
\end{table}

\section{Proposed Method}
\label{sec:proposed_method}
Node scores present the importance of nodes across a given graph.
Thus, as the graph evolves under normal behavior with the insertion and deletion of edges, node scores evolve smoothly.
In contrast, anomalies such as network attacks or rating manipulation often aim to complete their goals in a short time, e.g. to satisfy their clients, inducing large and abrupt changes in node scores.
Our key intuition is that such abrupt gains or losses are reflected in the {\bf 1st and 2nd derivative} of node scores: 
large 1st derivative identifies large changes, while large 2nd derivative identifies abrupt changes in the trend of the data, thereby distinguishing changes from normal users who evolve according to smooth trends.
Thus, tracking 1st and 2nd order derivatives helps detect anomalies in dynamic graphs.

Changes in a dynamic graph are classified into two types: structure changes and edge weight changes.
Since these two types of changes affect node scores of the graph differently (more details in Section~\ref{sec:two_pr}), we need to handle them separately.
Thus, first, we classify anomalies in dynamic graphs into two types: \structure and \edge (Section~\ref{sec:two_anomalies}).
Then we design two node score functions based on characteristics of these two types of anomalies, respectively (Section~\ref{sec:two_pr}).
Next, we define two novel metrics for anomalousness using 1st and 2nd order derivatives of our node scores, and verify the effectiveness of each metric on the respective type of anomalies theoretically (Section~\ref{sec:anomrank}).
Based on these analyses, we introduce our method \method, a fast and accurate anomaly detection algorithm in dynamic graphs (Section~\ref{sec:algorithm}).

\begin{figure}[!t]
	\centering
	\vspace{-5mm}
	\subfigure[Structure Change]
	{
		\label{fig:anomalyS}
		\includegraphics[width=.47\linewidth]{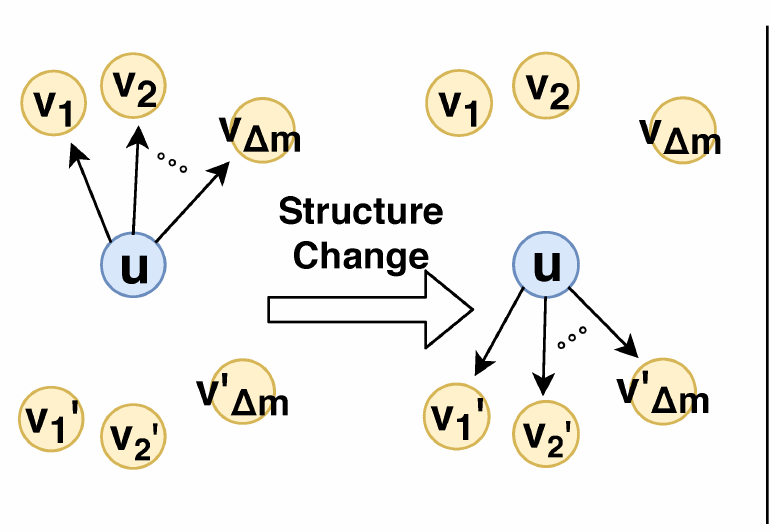}	
	}
	\subfigure[Edge Weight Change]
	{
		\label{fig:anomalyW}
		\includegraphics[width=.47\linewidth]{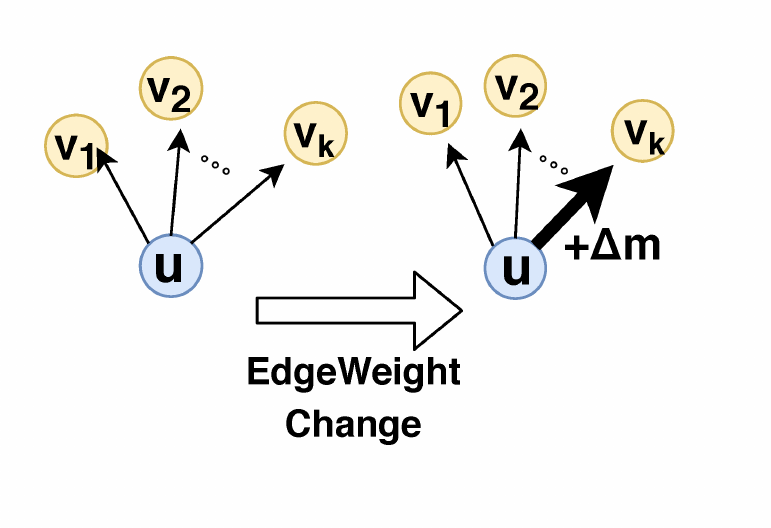}
	}
	\caption
	{ 
		Two-pronged approach:
		Changes in dynamic graphs are classified into two types, structure change and edge weight change.
	}
	\label{fig:anomalies}
\end{figure}

\subsection{Anomalies in Dynamic Graphs}
\label{sec:two_anomalies}
We classify anomalies in dynamic graphs into two types: \structure and \edge.

\subsubsection{\structure}
It is suspicious if a number of edges are inserted/deleted among nodes which are previously unrelated/related. 
Hence, \structure denotes a massive change with regard to the graph structure.
One example of \structure is spam in mail network graphs:
a spammer sends mail to many unknown individuals, generating out-edges toward previously unrelated nodes.
Data exfiltration attacks in computer network graphs are another example:
attackers transfer a target's data stealthily, generating unseen edges around a target machine to steal information.
As illustrated in Figure~\ref{fig:anomalies}, we define a structure change as follows:

\begin{defi}[Structure Change]
	\label{def:structure}
	If a node $u$ changes the destination of $\Delta m$ of its out-edges from previous neighbors $v_1, \dots, v_{\Delta m}$ to new neighbors $v_1', \dots, v_{\Delta m}'$, we call the change a {\it structure change} of size $\Delta m$.
\end{defi}
\noindent With abnormally large $\Delta m$, a structure change becomes an \structure.
To detect \structure, we need to focus on the existence of edges between two nodes, rather than the number of occurrences of edges between two nodes.

\subsubsection{\edge}
In dynamic graphs, an edge between two nodes could occur several times.
Edge weight is proportional to the number of edge occurrences.
\edge denotes a massive change of edge weights in a graph.
One example of \edge is port scan attacks in computer network graphs:
to scan ports in a target IP address, attackers repeatedly connect to the IP address, thus increasing the number of edge occurrences to the target node.
On Twitter, high edge density on a user-keyword graph could indicate bot-like behavior, e.g. bots posting about the same content repeatedly.
As illustrated in Figure~\ref{fig:anomalies}, we define an edge weight change as follows:

\begin{defi}[Edge Weight Change]
	\label{def:weight}
	If a node $u$ adds/subtracts $\Delta m$ out-edges to neighbor node $v$, we call the change an {\it edge weight change} of size $\Delta m$.
\end{defi}

\noindent With abnormally large $\Delta m$, an edge weight change becomes an \edge.
In contrast to \structure, here we focus on the number of occurrences of each edge, rather than only the presence or absence of an edge.

\subsection{Node Score Functions for Detecting \structure and \edge}
\label{sec:two_pr}
To detect \structure and \edge, we first define two node score functions, \prS and \prE, which we use to define our anomalousness metrics in Section~\ref{sec:anomrank}.

\subsubsection{\prS}
We introduce node score \prS, which we use to catch \structure.
Define the row-normalized unweighted adjacency matrix $\NAS$, a starting vector $\bs$ which is an all-$\frac{1}{n}$ vector of length $n$ (the number of nodes), and the damping factor $c$.
\begin{defi}[\prS]
	\prS node score vector $\ps$ is defined by the following iterative equation:
	\small
	\begin{equation*}
		\ps = c\NATS\ps + (1-c)\bs
	\end{equation*}
	\normalsize
\end{defi}
\noindent For this (unweighted) case, \prS is the same as PageRank, but we refer to it as \prS for consistency with our later definitions. 
Note that the number of edge occurrences between nodes is not considered in \prS. 
Using Lemma \ref{lemma:dynamic_ps}, we can compute $\prS$ incrementally at fast speed, in dynamic graphs. 

\subsubsection{\prE}
Next, we introduce the second node score, \prE, which we use to catch \edge. 
To incorporate edge weight, we use the weighted adjacency matrix $\AE$ instead of $\AS$. 
However, this is not enough on its own: imagine an attacker node who adds a massive number of edges, all toward a single target node, and the attacker has no other neighbors. 
Since $\NAE$ is row-normalized, this attacker appears no different in $\NAE$ as if they only added a single edge toward the same target. 
Hence, to catch such attackers, we also introduce an \emph{out-degree proportional starting vector} $\be$, i.e. setting the initial scores of each node proportional to its outdegree.

\begin{defi}[\prE]
	\prE node score vector $\pe$ is defined by the following iterative equation:
	\small
	\begin{equation*}
	\pe = c\NATE\pe + (1-c)\be
	\end{equation*}
	\normalsize
\end{defi}
\noindent $\AE(i,j)$ is the edge weight from node $i$ to node $j$.
$\be(i)$ is $\frac{m_i}{m}$, where $m_i$ denotes the total edge weight of out-edges of node $i$, and $m$ denotes the total edge weight of the graph.

Next, we show how \prE is computed incrementally in a dynamic graph.
Assume that a change $\Delta G$ happens in graph $G$ in time interval $\dt$, inducing changes $\DAE$ and $\bed$ in the adjacency matrix and the starting vector, respectively.
\begin{lemm}[Dynamic \prE]
	\label{lemma:dynamic_pe}
	Given updates $\DAE$ and $\bed$ in a graph during $\dt$, an updated score vector $\pe(t+\dt)$ is computed incrementally from a previous score vector $\pe(t)$ as follows:
	\vspace{-2mm}
	\small
	\begin{align*}
	\label{equ:update_pe}
	\pe(t+\dt) =~&\pe(t) + \sum_{k=0}^{\infty}(c(\NATE+\DAE))^{k}c\DAE\pe(t) \\
	&+  (1-c)\sum_{k=0}^{\infty}(c(\NATE+\DAE))^{k}\bed
	\end{align*}
	\normalsize
	\vspace{-2mm}
	\begin{proof}
		For brevity, $\pen \leftarrow \pe(t+\dt)$ and $\peo \leftarrow \pe(t)$.
		\small
		\begin{align*}
		&\pen = (1-c)\sum_{k=0}^{\infty}c^k(\NATE+\DAE)^k(\be + \bed)\\
		&= (1-c)\sum_{k=0}^{\infty}c^k(\NATE+\DAE)^k\be + (1-c)\sum_{k=0}^{\infty}c^k(\NATE+\DAE)^k\bed\\
		&= \pe^{o} + \sum_{k=0}^{\infty}(c(\NATE+\DAE))^{k}c\DAE\pe^{o} +  (1-c)\sum_{k=0}^{\infty}(c(\NATE+\DAE))^{k}\bed
		\end{align*}
		\normalsize
		In the third line, we use Lemma~\ref{lemma:dynamic_ps}.
	\end{proof}
\end{lemm}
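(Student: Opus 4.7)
The plan is to write $\pen = \pe(t+\dt)$ in its closed form relative to the updated graph, then split the expression by linearity of the geometric series in the starting vector. Concretely, since the new graph has transposed row-normalized adjacency matrix $\NATE + \DAE$ and new starting vector $\be + \bed$, I would begin from
\begin{equation*}
\pen = (1-c)\sum_{k=0}^{\infty} c^k (\NATE + \DAE)^k (\be + \bed).
\end{equation*}
Distributing $(\be + \bed)$ across the sum splits $\pen$ into two pieces: one driven by the old starting vector $\be$ on the new graph, and one driven purely by the increment $\bed$ on the new graph.

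Next I would handle the $\be$-piece by invoking Lemma~\ref{lemma:dynamic_ps} (Dynamic PageRank), which was proved for the situation where only the adjacency matrix changes and the starting vector is held fixed. That lemma exactly rewrites $(1-c)\sum_{k=0}^{\infty} c^k (\NATE + \DAE)^k \be$ as $\peo + \sum_{k=0}^{\infty} (c(\NATE + \DAE))^{k} c \DAE \peo$, i.e. the old score plus a correction term. For the $\bed$-piece, no further manipulation is needed: it is already in the desired propagation form $(1-c)\sum_{k=0}^{\infty}(c(\NATE+\DAE))^{k}\bed$, interpreted as an extra injection of mass due to the fact that \prE's starting vector depends on out-degrees and therefore shifts with the graph.

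Adding the two pieces yields the claimed identity, so the proof is essentially a linearity decomposition followed by a single invocation of Lemma~\ref{lemma:dynamic_ps}. The main conceptual obstacle, and the reason Lemma~\ref{lemma:dynamic_ps} is not immediately sufficient by itself, is that for \prE the starting vector is not static: because $\be$ is out-degree proportional, any change $\DAE$ to the graph induces a coupled change $\bed$ in $\be$, and the two updates must be accounted for separately to avoid double counting. Once the split is made cleanly, each half is routine; the only point I would be careful about is ensuring that the correction term in the first half uses $\peo$ (the old score with the old starting vector) rather than some mixed quantity, since that is exactly what Lemma~\ref{lemma:dynamic_ps} guarantees.
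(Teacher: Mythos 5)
Your proposal is correct and matches the paper's own proof: both start from the closed-form expression $(1-c)\sum_{k=0}^{\infty} c^k (\NATE + \DAE)^k (\be + \bed)$, split it by linearity into the $\be$-driven and $\bed$-driven parts, and apply Lemma~\ref{lemma:dynamic_ps} to rewrite the $\be$ part as $\peo$ plus the correction term. No gaps; the reasoning is essentially identical to the paper's.
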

\noindent Note that, for small changes in a graph, the starting vectors of the last two terms, $c\DAE\pe(t)$ and $\bed$ have much smaller $L1$ lengths than the original starting vector $\be$, so they can be computed at fast speed.

\subsubsection{Suitability}
\label{sec:suitable}

We estimate changes in \prS induced by a structure change (Definition~\ref{def:structure}) and compare the changes with those in \prE to prove the suitability of \prS for detecting \structure.

\begin{lemm}[Upper Bound for Structure Change in \prS]
	\label{lemma:ps_structure}
	When a structure change of size $\Delta m$ happens around a node $u$ with $k$ out-neighbors, $\lVert\DAS\rVert_{1}$ is upper-bounded by $\frac{2\Delta m}{k}$.
	\begin{proof}
		In $\DAS$, only the $u$-th column has nonzeros.
		Thus, $\lVert\DAS\rVert_{1}$
		$=\lVert\DAS(u)\rVert_{1}$.
		$\DAS(u)$ is normalized by $k$ as the total number of out-neighbors of node $u$ is $k$.
		For out-neighbors $v_i = v_1, \dots, v_{\Delta m}$ who lose edges, $\DAS(v_i,u) = -\frac{1}{k}$.
		For out-neighbors $v_i' = v_1', \dots, v_{\Delta m}'$ who earn edges, $\DAS(v_i',u) = \frac{1}{k}$.
		Then $\lVert\DAS\rVert_{1}=\lVert\DAS(u)\rVert_{1} = \frac{\Delta m}{k}+\frac{\Delta m}{k} = \frac{2\Delta m}{k}$.
	\end{proof}
\end{lemm}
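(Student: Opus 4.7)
The plan is to exploit locality: a structure change at node $u$ only modifies node $u$'s outgoing edges, so only the $u$-th row of $\NAS$ changes. Transposing, only the $u$-th column of $\NATS$ changes, hence $\DAS = \NBTS - \NATS$ has nonzeros confined to column $u$. Since the matrix $L_1$ norm is the maximum column $L_1$ norm, this immediately reduces $\lVert \DAS \rVert_1$ to $\lVert \DAS(\cdot, u) \rVert_1$.

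Next I would compute that single column entry-wise. Before the change, the $u$-th row of $\NAS$ assigns $1/k$ to each of $u$'s $k$ out-neighbors and $0$ elsewhere. Because Definition~\ref{def:structure} describes a \emph{redirection} of $\Delta m$ edges (destinations are changed, but edges are neither created nor destroyed), the total out-degree of $u$ stays at $k$, and therefore the row-normalization factor is the same before and after. Subtracting the two rows, the departing neighbors $v_1,\dots,v_{\Delta m}$ contribute entries of $-1/k$, the arriving neighbors $v_1',\dots,v_{\Delta m}'$ contribute entries of $+1/k$, and everything else cancels.

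Summing absolute values over the column then yields $\Delta m \cdot \tfrac{1}{k} + \Delta m \cdot \tfrac{1}{k} = \tfrac{2 \Delta m}{k}$. The statement is phrased as an \emph{upper} bound rather than an equality only because the arriving and departing neighbor sets $\{v_i\}$ and $\{v_i'\}$ could overlap; such overlaps cancel pairs of $\pm 1/k$ contributions and make the column norm strictly smaller. In the generic case where the two sets are disjoint the bound is tight.

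I do not foresee any serious obstacle: the argument is essentially a direct computation on a single column. The only subtle point is justifying that the normalization constant $k$ is unchanged after the swap, which I would call out explicitly so that the reader sees why the $-1/k$ and $+1/k$ contributions use the same denominator; after that, the $L_1$ sum is mechanical.
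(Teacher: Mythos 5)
Your proposal is correct and follows essentially the same route as the paper's proof: confining $\DAS$ to the $u$-th column, noting the normalization factor $k$ is unchanged since edges are only redirected, and summing the $\pm\frac{1}{k}$ entries to get $\frac{2\Delta m}{k}$. Your extra remark that overlap between the departing and arriving neighbor sets is what makes the result an upper bound rather than an equality is a nice clarification the paper leaves implicit.
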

\noindent When a structure change is presented in \prE, $\lVert\Delta \be \rVert_{1} = 0$ since there is no change in the number of edges.
Moreover $\lVert\DAE\rVert_{1}=\frac{2\Delta m}{m_u}$ since each row in $\AE$ is normalized by the total sum of out-edge weights, $m_u$, which is larger than the total number of out-neighbors $k$.
In other words, a structure change generates larger changes in \prS ($\frac{2\Delta m}{k}$) than \prE ($\frac{2\Delta m}{m_u}$).
Thus \prS is more suitable to detect \structure than \prE.

Similarly, we estimate changes in \prE induced by an edge weight change (Definition~\ref{def:weight}) and compare the changes with those in \prS to prove the suitability of \prE for detecting \edge.

\begin{lemm}[Upper Bound for Edge Weight Change in \prE]
	\label{lemma:pe_edge}
	When an edge weight change of size $\Delta m$ happens around a node $u$ with $m_u$ out-edge weights in a graph with $m$ total edge weights, $\lVert\DAE\rVert_{1}$ and $\lVert\bed\rVert_{1}$ are upper bounded by $\frac{2\Delta m}{m_u}$ and $ \frac{2\Delta m}{m}$, respectively.
	\begin{proof}
		In $\DAE$, only the $u$-th column has nonzeros.
		Then $\lVert\DAE\rVert_{1}$
		$=\lVert\DAE(u)\rVert_{1}$.
		Node $u$ has $m_{v_i}$ edges toward each out-neighbor $v_i (i=1,\dots,k)$.
		Thus the total sum of out-edge weights, $m_u$, is $\sum_{i=1}^{k}m_{v_i}$.
		Since an weighted adjacency matrix is normalized by the total out-edge weights, $\NATE(v_i,u) = \frac{m_{v_i}}{m_u}$.
		After $\Delta m$ edges are added from node $u$ to node $v_k$, $\DAE(v_i,u) = \frac{m_{v_i}}{m_u+\Delta m} - \frac{m_{v_i}}{m_u}$ for $i \neq k$, $\DAE(v_i,u) = \frac{m_{v_i}+\Delta m}{m_u+\Delta m} - \frac{m_{v_i}}{m_u}$ for $i=k$.
		Then $\lVert\DAE\rVert_{1}=\lVert\DAE(u)\rVert_{1}$ is bounded as follows:
		\vspace{-1mm}
		\small
		\begin{align*}
		\lVert\DAE\rVert_{1}=\lVert\DAE(u)\rVert_{1} &= \sum_{i=1}^{k}m_{v_i}(\frac{1}{m_u}-\frac{1}{m_u+\Delta m}) + \frac{\Delta m}{m_u + \Delta m}\\
		&=  \frac{2\Delta m}{m_u + \Delta m} \le \frac{2\Delta m}{m_u}
		\end{align*} 
		\normalsize
		$\be(i) = \frac{m_i}{m}$ where $m_i$ is the total sum of out-edge weights of node $i$.
		After $\Delta m$ edges are added from node $u$ to node $v_k$, $\bed(i) = \frac{m_i}{m+\Delta m} - \frac{m_i}{m}$ for $i \neq u$, $\bed(i) = \frac{m_i + \Delta m}{m+\Delta m} - \frac{m_i}{m}$ for $i=u$.
		Then $\lVert\bed\rVert_{1}$ is bounded as follows:
		\vspace{-1mm}
		\small
		\begin{align*}
		\lVert\bed\rVert_{1} = \sum_{i=1}^{n}m_i(\frac{1}{m}-\frac{1}{m+\Delta m}) + \frac{\Delta m}{m + \Delta m} = \frac{2\Delta m}{m + \Delta m} \le \frac{2\Delta m}{m}
		\end{align*}
		\normalsize
	\end{proof}
\end{lemm}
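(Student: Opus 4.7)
The plan is to bound the two quantities separately using the observation that an edge weight change of size $\Delta m$ at node $u$ perturbs quantities associated only with node $u$'s outgoing edges and with the global total weight, and in both cases the calculation telescopes.

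For $\lVert\DAE\rVert_1$: First I would note that, because edges are added/removed only on out-edges of $u$, the row-normalized matrix $\NAE$ changes only in its $u$-th row; equivalently, $\DAE=\NBTE-\NATE$ has nonzero entries only in column $u$. Since the matrix $L1$ norm equals the maximum column $L1$ norm, it suffices to compute $\lVert\DAE(\cdot,u)\rVert_1$. Before the change each out-neighbor $v_i$ of $u$ satisfies $\NATE(v_i,u)=m_{v_i}/m_u$, and after the change the denominator shifts from $m_u$ to $m_u+\Delta m$ while the numerator of one distinguished entry (the neighbor $v_k$ receiving the new weight) picks up an extra $\Delta m$. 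Summing the absolute differences separately over the $k-1$ neighbors whose numerator does not change and the single neighbor whose numerator does, and using the identity $\tfrac{1}{m_u}-\tfrac{1}{m_u+\Delta m}=\tfrac{\Delta m}{m_u(m_u+\Delta m)}$, the two contributions should combine (with matching signs) into $\tfrac{2\Delta m}{m_u+\Delta m}$, which is bounded by $\tfrac{2\Delta m}{m_u}$.

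For $\lVert\bed\rVert_1$: The argument is structurally the same but now every coordinate of $\be$ moves because the global denominator $m$ shifts to $m+\Delta m$. I would write $\bed(i)=m_i/(m+\Delta m)-m_i/m$ for $i\ne u$ (only the denominator changes) and $\bed(u)=(m_u+\Delta m)/(m+\Delta m)-m_u/m$ (both numerator and denominator change). Summing the absolute values over all $n$ entries and using $\sum_i m_i=m$ to collapse the telescoping sum yields $\tfrac{2\Delta m}{m+\Delta m}\le\tfrac{2\Delta m}{m}$, which is exactly the claim.

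The main obstacle I anticipate is the sign bookkeeping around normalization: every out-neighbor of $u$ appears to lose weight under the shift of the denominator from $m_u$ to $m_u+\Delta m$, while the single neighbor $v_k$ also gains $\Delta m$ in its numerator, and one must check that the absolute values combine into the clean factor of $2$ (rather than a looser bound). I would resolve this by explicitly splitting the sum into the $k-1$ unchanged-numerator terms and the one changed-numerator term before taking absolute values, since each unchanged-numerator term is manifestly negative and the changed-numerator term is manifestly positive. Once this split is set up, the $\bed$ estimate is essentially the same calculation with $m$ replacing $m_u$ and the sum taken over all nodes rather than just neighbors of $u$.
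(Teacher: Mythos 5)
Your proposal is correct and follows essentially the same route as the paper: both compute the perturbed entries of the $u$-th column of $\NATE$ (resp.\ of $\be$) explicitly, split off the single entry whose numerator gains $\Delta m$, and telescope using $\sum_i m_{v_i}=m_u$ (resp.\ $\sum_i m_i=m$) to reach $\tfrac{2\Delta m}{m_u+\Delta m}\le\tfrac{2\Delta m}{m_u}$ and $\tfrac{2\Delta m}{m+\Delta m}\le\tfrac{2\Delta m}{m}$. Your explicit sign split is, if anything, slightly more careful than the paper's one-line summation (the gaining entry's positive part partially cancels, so the exact $L1$ value is a bit below $\tfrac{2\Delta m}{m_u+\Delta m}$, which only strengthens the stated bound).
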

\vspace{-2mm}
\noindent In contrast, when an edge weight change is presented in \prS, $\lVert\DAS\rVert_{1}=0$ since the number of out-neighbors is unchanged.
Note that $\lVert\Delta \bs \rVert_{1} = 0$ since $\bs$ is fixed in \prS.
In other words, \edge does not induce any change in \prS.

\subsection{Metrics for \structure and \edge}
\label{sec:anomrank}
Next, we define our two novel metrics for evaluating the anomalousness at each time in dynamic graphs.

\subsubsection{\anS}
\label{sec:anomrank:s}
First, we discretize the first order derivative of \prS vector $\ps$ as follows:
\small
\begin{equation*}
	\psone = \frac{\ps(t+\Delta t)-\ps(t)}{\dt}
\end{equation*}
\normalsize

\noindent Similarly, the second order derivative of $\ps$ is discretized as follows:
\small
\begin{align*}
\pstwo = \frac{(\ps(t+\Delta t)-\ps(t)) - (\ps(t)-\ps(t - \Delta t))}{\Delta t^2}
\end{align*}
\normalsize

\noindent Next, we define a novel metric \anS which is designed to detect \structure effectively.
\begin{defi}[\anS]
	Given \prS vector $\ps$, \anS $\as$ is an $(n\times2)$ matrix $[\ps'~\ps'']$, concatenating 1st and 2nd derivatives of $\ps$.
	The \anS score is $\lVert\as\rVert_{1}$.
\end{defi}

\noindent Next, we study how \anS scores change under the assumption of a normal stream, or an anomaly, thus explaining how it distinguishes anomalies from normal behavior.
First, we model a normal graph stream based on Lipschitz continuity to capture smoothness:

\begin{assumption}[$\lVert\ps(t)\rVert_{1}$ in Normal Stream]
	\label{ob:s_normal}
	In a normal graph stream, $\lVert\ps(t)\rVert_{1}$ is Lipschitz continuous with positive real constants $K_1$ and $K_2$ such that,
	\begin{align*}
		\lVert\psone\rVert_{1} \le K_1~and~\lVert\pstwo\rVert_{1} \le K_2
	\end{align*}
\end{assumption}

\noindent In Lemma~\ref{lemma:ps_one}, we back up Assumption~\ref{ob:s_normal} by upper-bounding $\lVert\psone\rVert_{1}$ and $\lVert\pstwo\rVert_{1}$.
For brevity, all proofs of this subsection are given in Supplement~\ref{sec:app:proofs}.

\begin{lemm}[Upper bound of $\lVert\psone\rVert_{1}$]
	\label{lemma:ps_one}
	Given damping factor $c$ and updates $\DAS$ in the adjacency matrix during $\dt$, $\lVert\psone\rVert_{1}$ is upper-bounded by $\frac{c}{1-c}\lVert\frac{\DAS}{\dt}\rVert_{1}$.
	\begin{proof}
		Proofs are given in Supplement~\ref{sec:app:proofs}.
	\end{proof}
\end{lemm}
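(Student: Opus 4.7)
The plan is to invoke Lemma~\ref{lemma:dynamic_ps} with the unweighted row-normalized adjacency matrix $\NAS$, so that $\ps$ plays the role of $\p$; this gives a clean series expression for $\ps(t+\dt)-\ps(t)$. After dividing by $\dt$, I would then bound the resulting infinite series in the $L1$ norm using submultiplicativity of the induced matrix norm and the triangle inequality, and show that the common ratio of the resulting geometric series equals $c$.

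Concretely, Lemma~\ref{lemma:dynamic_ps} yields
\[
\ps(t+\dt) - \ps(t) \;=\; \sum_{k=0}^{\infty}\bigl(c(\NATS+\DAS)\bigr)^{k} c\DAS\,\ps(t).
\]
Dividing by $\dt$ and taking $L1$ norms,
\[
\lVert\psone\rVert_{1} \;\le\; \sum_{k=0}^{\infty}\lVert c(\NATS+\DAS)\rVert_{1}^{\,k}\cdot c\bigl\lVert\tfrac{\DAS}{\dt}\bigr\rVert_{1}\cdot \lVert\ps(t)\rVert_{1}.
\]
The two normalization facts I would then plug in are: (i) $\NATS+\DAS = \NBTS$ is the transpose of a row-stochastic matrix, so each of its columns sums to $1$, and hence under the paper's convention that the matrix $L1$ norm is the maximum column sum we get $\lVert c(\NATS+\DAS)\rVert_1 = c$; and (ii) $\lVert\ps(t)\rVert_1 = 1$, since \prS with starting vector $\bs$ of unit $L1$ length is a probability-like vector (this follows by induction from the defining fixed-point equation $\ps = c\NATS\ps + (1-c)\bs$, noting that $\lVert\bs\rVert_1 = 1$ and $\lVert\NATS\ps\rVert_1 \le \lVert\ps\rVert_1$). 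Substituting these into the bound collapses the series to a geometric one, giving
\[
\lVert\psone\rVert_{1} \;\le\; \sum_{k=0}^{\infty} c^{k}\cdot c\bigl\lVert\tfrac{\DAS}{\dt}\bigr\rVert_{1} \;=\; \frac{c}{1-c}\bigl\lVert\tfrac{\DAS}{\dt}\bigr\rVert_{1},
\]
which is exactly the claimed bound.

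The main obstacle I anticipate is not any hard calculation but careful bookkeeping of the two normalization facts above: once one accepts the paper's column-sum convention for the matrix $L1$ norm and the stochasticity of row-normalized transition matrices, everything follows from submultiplicativity and a geometric series. The interchange of norm and infinite sum is justified by absolute convergence because $c < 1$ makes $\lVert c(\NATS+\DAS)\rVert_1 = c < 1$, so no further analytic subtleties arise.
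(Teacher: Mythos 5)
Your proposal is correct and follows essentially the same route as the paper's own proof: apply Lemma~\ref{lemma:dynamic_ps}, take $L1$ norms with submultiplicativity, use that $\NATS+\DAS$ is column-stochastic and $\lVert\ps(t)\rVert_1=1$, and sum the geometric series to get $\frac{c}{1-c}\lVert\frac{\DAS}{\dt}\rVert_{1}$. Your explicit justification of $\lVert\ps(t)\rVert_1=1$ via the fixed-point equation is a small addition the paper leaves implicit, but the argument is the same.
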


\noindent We bound the $L1$ length of $\pstwo$ in terms of $L1$ length of $\DASO$ and $\DASN$, where $\DASO$ denotes the changes in $\AS$ from time $(t-\Delta t)$ to time $t$, and $\DASN$ denotes the changes in $\AS$ from $t$ to $(t+\Delta t)$.

\begin{lemm}[Upper bound of $\lVert\pstwo\rVert_{1}$]
	\label{lemma:ps_two}
	Given damping factor $c$ and sequencing updates $\DASO$ and $\DASN$,
	$\lVert\pstwo\rVert_{1}$ is upper-bounded by $\frac{1}{\dt^2}(\frac{c}{1-c}\lVert\DASN - \DASO\rVert_{1} + (\frac{c}{1-c})^2\lVert\DASN\rVert_{1} ^2 + (\frac{c}{1-c})^2\lVert\DASO\rVert_{1}^2)$.
	\begin{proof}
		Proofs are given in Supplement~\ref{sec:app:proofs}.
	\end{proof}
\end{lemm}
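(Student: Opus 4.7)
The plan is to expand the numerator $\Delta^2 \ps = (\ps(t+\dt) - \ps(t)) - (\ps(t) - \ps(t-\dt))$ by applying Lemma~\ref{lemma:dynamic_ps} twice. Writing $M_\tau$ for the transpose of the row-normalized adjacency at time $\tau$, so that $\DASO = M_t - M_{t-\dt}$ and $\DASN = M_{t+\dt} - M_t$, Lemma~\ref{lemma:dynamic_ps} gives the compact resolvent forms $\ps(t+\dt) - \ps(t) = (I - cM_{t+\dt})^{-1} c\DASN \ps(t)$ and $\ps(t) - \ps(t-\dt) = (I - cM_t)^{-1} c\DASO \ps(t-\dt)$. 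The first move is to eliminate $\ps(t)$ from the former by substituting $\ps(t) = \ps(t-\dt) + (I - cM_t)^{-1} c\DASO \ps(t-\dt)$, so that the whole second difference is expressed purely in terms of $\ps(t-\dt)$.

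The second step is to separate the ``linear'' piece in $\DASN - \DASO$ from the ``quadratic'' pieces. To do this, I would add and subtract $(I - cM_{t+\dt})^{-1} c\DASO \ps(t-\dt)$, which splits $\Delta^2 \ps$ into three summands:
\begin{align*}
T_1 &= (I - cM_{t+\dt})^{-1} c(\DASN - \DASO)\,\ps(t-\dt),\\
T_2 &= \bigl[(I - cM_{t+\dt})^{-1} - (I - cM_t)^{-1}\bigr]\, c\DASO \,\ps(t-\dt),\\
T_3 &= (I - cM_{t+\dt})^{-1} c\DASN (I-cM_t)^{-1} c\DASO \,\ps(t-\dt).
\end{align*}
The key algebraic step is to rewrite $T_2$ using the resolvent identity $(I-X)^{-1} - (I-Y)^{-1} = (I-X)^{-1}(X-Y)(I-Y)^{-1}$ with $X = cM_{t+\dt}$ and $Y = cM_t$, which gives $T_2 = (I-cM_{t+\dt})^{-1} c\DASN (I-cM_t)^{-1} c\DASO \ps(t-\dt)$, i.e., the same form as $T_3$.

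The third step is to take $L_1$ norms termwise using submultiplicativity, the fact that $\lVert cM_\tau\rVert_1 \le c$ implies $\lVert (I - cM_\tau)^{-1}\rVert_1 \le \tfrac{1}{1-c}$, and $\lVert \ps(t-\dt)\rVert_1 = 1$. This yields $\lVert T_1\rVert_1 \le \tfrac{c}{1-c}\lVert \DASN - \DASO\rVert_1$ and $\lVert T_2\rVert_1,\lVert T_3\rVert_1 \le \bigl(\tfrac{c}{1-c}\bigr)^2 \lVert\DASN\rVert_1\lVert\DASO\rVert_1$. Finally, applying AM-GM in the form $2ab \le a^2 + b^2$ upgrades the cross term $2\lVert\DASN\rVert_1\lVert\DASO\rVert_1$ to $\lVert\DASN\rVert_1^2 + \lVert\DASO\rVert_1^2$, and dividing by $\dt^2$ delivers the stated bound.

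The main obstacle is the add-and-subtract bookkeeping together with the invocation of the resolvent identity: once $T_2$ is rewritten so that a second factor of $c\DASN$ appears explicitly, the separation into one ``first-order'' term in $\DASN - \DASO$ and two ``second-order'' terms in $\lVert\DASN\rVert_1\lVert\DASO\rVert_1$ becomes transparent, and all remaining estimates are routine applications of submultiplicativity and AM-GM.
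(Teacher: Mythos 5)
Your proof is correct, and it rests on the same engine as the paper's — expand the second difference via the dynamic-PageRank update (your resolvent $(I-cM_\tau)^{-1}$ is exactly the closed form of the Neumann series in Lemma~\ref{lemma:dynamic_ps}), isolate a first-order term in $\DASN-\DASO$, and bound everything with $\lVert(I-cM_\tau)^{-1}\rVert_{1}\le\frac{1}{1-c}$ and the fact that the score vectors have unit $L1$ length — but your decomposition is genuinely different in its bookkeeping. The paper keeps the linear term anchored at $\ps(t)$ with the time-$t$ resolvent, writing the difference of starting vectors as $(\DASN-\DASO)\ps(t)+\DASO(\ps(t)-\ps(t-\dt))$ and then re-expanding $\ps(t)-\ps(t-\dt)$ by its own update formula; this makes the two second-order corrections come out as pure squares, $(\frac{c}{1-c})^2\lVert\DASO\rVert_{1}^2$ and $(\frac{c}{1-c})^2\lVert\DASN\rVert_{1}^2$, so the stated bound drops out with no further inequality. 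You instead push everything onto $\ps(t-\dt)$, add and subtract, and invoke the resolvent identity, which makes both second-order terms equal to the same cross product bounded by $(\frac{c}{1-c})^2\lVert\DASN\rVert_{1}\lVert\DASO\rVert_{1}$, and you then need $2ab\le a^2+b^2$ to reach the lemma's form. Your intermediate bound $\frac{c}{1-c}\lVert\DASN-\DASO\rVert_{1}+2(\frac{c}{1-c})^2\lVert\DASN\rVert_{1}\lVert\DASO\rVert_{1}$ is in fact at least as tight as the stated one, so nothing is lost; what the paper's grouping buys is that no AM-GM step is needed, while yours buys a more symmetric computation (your $T_2$ and $T_3$ coincide) and a single reference vector $\ps(t-\dt)$ throughout.
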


\noindent Normal graphs have small changes thus having small $\lVert\DAS\rVert_{1}$.
This results in small values of $\lVert\psone\rVert_{1}$.
In addition, normal graphs change gradually thus having small$\lVert\DASN - \DASO\rVert_{1}$.
This leads to small values of $\lVert\pstwo\rVert_{1}$.
Then, \anS score $\lVert\as\rVert_{1} = \max(\lVert\psone\rVert_{1}, \lVert\pstwo\rVert_{1})$ has small values in normal graph streams under small upper bounds.

\begin{observation}[\structure in \anS]
	\label{ob:s_anomaly}
	\structure involves sudden structure changes, inducing large \anS scores.
\end{observation}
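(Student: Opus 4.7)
The goal is to show that when an \structure of size $\Delta m$ occurs, with $\Delta m$ abnormally large relative to the Lipschitz constants $K_1, K_2$ of Assumption~\ref{ob:s_normal}, the \anS score $\lVert\as\rVert_{1}$ grows in proportion to $\Delta m$, pushing it well above its normal-stream value. My plan is to complement the upper bounds of Lemma~\ref{lemma:ps_one} and Lemma~\ref{lemma:ps_two} with matching lower bounds that are driven by the structural characterization of Lemma~\ref{lemma:ps_structure}.

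I would start by specializing Lemma~\ref{lemma:dynamic_ps} to \prS: writing $\ps(t+\dt)-\ps(t) = c\DASN\ps(t) + R$, where $R$ collects the higher-order terms of the geometric series. By Lemma~\ref{lemma:ps_structure}, $\DASN$ has only its $u$-th column nonzero with L1 norm $\frac{2\Delta m}{k}$, where $u$ is the source of the anomaly and $k$ its out-degree; hence $\lVert c\DASN\ps(t)\rVert_{1} = c\cdot \ps(t)(u) \cdot \frac{2\Delta m}{k}$. The $(1-c)\bs$ teleport term in the definition of \prS guarantees $\ps(t)(u) \ge (1-c)/n$, and a standard contraction argument bounds the tail by $\lVert R\rVert_{1}\le \frac{c}{1-c}\lVert c\DASN\ps(t)\rVert_{1}$, so the leading term dominates and $\lVert\psone\rVert_{1}$ scales linearly with $\Delta m/\dt$. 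For the second derivative, I would exploit the suddenness of the anomaly: the preceding window is approximately static, so $\DASO \approx 0$ while $\DASN$ has L1 norm $\frac{2\Delta m}{k}$. The discrete second derivative then reduces to $\pstwo \approx \psone/\dt$, giving $\lVert\pstwo\rVert_{1}$ of order $\Delta m/\dt^2$, even larger than $\lVert\psone\rVert_{1}$ when $\dt$ is small. Since $\lVert\as\rVert_{1} = \max(\lVert\psone\rVert_{1},\lVert\pstwo\rVert_{1})$, an abnormally large $\Delta m$ forces $\lVert\as\rVert_{1}$ well above the normal-stream bound $\max(K_1,K_2)$.

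The main obstacle is turning the informal word ``large'' into a quantitative threshold: Lemmas~\ref{lemma:ps_one} and~\ref{lemma:ps_two} only give upper bounds, so matching lower bounds are needed. These rest on two facts: (i) $\ps(t)(u)$ is bounded away from zero, which is secured by the teleport term, and (ii) the higher-order series contributions cannot cancel the leading term, which follows from the $c/(1-c)$ geometric contraction. With both in place, the contrast between the $\Theta(\Delta m)$ growth of the \structure-induced score and the constant normal-stream bound gives the observation a precise quantitative meaning.
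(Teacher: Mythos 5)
The paper never attempts what you attempt. Its justification of Observation~\ref{ob:s_anomaly} is deliberately modest: it combines Lemma~\ref{lemma:ps_structure} ($\lVert\DAS\rVert_1\le\frac{2\Delta m}{k}$ for a structure change) with the \emph{upper} bounds of Lemmas~\ref{lemma:ps_one}--\ref{lemma:ps_two} to obtain Theorem~\ref{theorem:up_d1d2_s}, whose bounds grow with $\dm$ and $\dmm$, and then contrasts these large bounds with the small normal-stream bounds of Assumption~\ref{ob:s_normal} --- explicitly calling this ``an intuition'' rather than a proof that the score is actually large. You instead go after a \emph{lower} bound on $\lVert\psone\rVert_1$ and $\lVert\pstwo\rVert_1$, which is a genuinely different and logically stronger route: a lower bound is what one would really need to conclude ``large \anS scores,'' and your ingredients (only the $u$-th column of $\DASN$ is nonzero, the teleport floor $\ps(t)(u)\ge\frac{1-c}{n}$, a geometric tail estimate) are the right ones for such an argument.

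However, two steps do not hold up as stated. First, your non-cancellation claim $\lVert R\rVert_1\le\frac{c}{1-c}\lVert c\DASN\ps(t)\rVert_1$ only yields a positive lower bound when $\frac{c}{1-c}<1$, i.e.\ $c<\frac12$; at the paper's operating point $c=0.5$ (Supplement~\ref{sec:app:exp}) the guarantee degenerates to zero, and cancellation is not merely a bounding artifact --- if the new neighbors $v_1',\dots,v_{\Delta m}'$ have outgoing structure similar to the old neighbors, the higher-order terms really do wash out much of the leading term. So you need either an explicit restriction on $c$ or a finer argument about where the positive and negative mass of $\DASN\ps(t)$ propagates. Second, the teleport floor costs you a factor of $n$: your guaranteed growth of $\lVert\psone\rVert_1$ is of order $\frac{c(1-c)}{n}\cdot\frac{2}{k}\dm$, smaller by $\Theta(n)$ than the upper bound $\frac{c}{1-c}\frac{2}{k}\dm$, so it exceeds the normal-stream constants $K_1,K_2$ only when $\Delta m$ is of order $nk$; the claim that an ``abnormally large'' $\Delta m$ \emph{forces} $\lVert\as\rVert_1$ above $\max(K_1,K_2)$ is therefore overstated unless quantified this way (or unless $u$ is assumed to have non-negligible score). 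A minor point: your second-derivative step assumes the preceding window is exactly static, which is not needed --- using Assumption~\ref{ob:s_normal} on the preceding window gives $\lVert\pstwo\rVert_1\ge(\lVert\psone\rVert_1-K_1)/\dt$ directly. Your identification $\lVert\as\rVert_1=\max(\lVert\psone\rVert_1,\lVert\pstwo\rVert_1)$ is consistent with the paper's matrix $L1$ convention, so that part is fine.
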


\noindent \structure happens with massive changes ($\dm$) abruptly ($\dmm$).
In the following Theorem~\ref{theorem:up_d1d2_s}, we explain Observation~\ref{ob:s_anomaly} based on large values of $\dm$ and $\dmm$ in \structure.

\begin{theo}[Upper bounds of $\lVert\psone\rVert_{1}$ and $\lVert\pstwo\rVert_{1}$ with \structure]
	\label{theorem:up_d1d2_s}
When \structure occurs with large $\dm$ and $\dmm$, L1 lengths of $\psone$ and $\pstwo$ are upper-bounded as follows:
	\small
	\begin{empheq}[box=\widefbox]{align}
		\lVert\psone\rVert_{1} &\le \frac{c}{1-c}\frac{2}{k}\dm \nonumber \\
		\lVert\pstwo\rVert_{1} &\le \frac{c}{1-c}\frac{2}{k}\dmm + 2(\frac{c}{1-c})^2(\frac{2}{k})^2(\dm)^2 \nonumber
	\end{empheq}
	\normalsize
	\begin{proof}
		Proofs are given in Supplement~\ref{sec:app:proofs}.
	\end{proof}
\end{theo}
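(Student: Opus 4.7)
The plan is to assemble the bound by plugging the \emph{structural} estimate from Lemma~\ref{lemma:ps_structure} into the \emph{general} derivative bounds from Lemmas~\ref{lemma:ps_one} and~\ref{lemma:ps_two}. Lemma~\ref{lemma:ps_structure} already tells us that during a single AnomalyS step of size $\Delta m$ around a node of out-degree $k$, the change matrix satisfies $\lVert\DAS\rVert_{1}\le \tfrac{2\Delta m}{k}$. The theorem then essentially reduces to careful bookkeeping of $\Delta m$, $\dt$, and $k$.

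For the first inequality, I would simply apply Lemma~\ref{lemma:ps_one} to $\DAS=\DASN$: it gives $\lVert\psone\rVert_{1}\le \tfrac{c}{1-c}\lVert\DASN/\dt\rVert_{1}$, and substituting the Lemma~\ref{lemma:ps_structure} bound $\lVert\DASN\rVert_{1}\le \tfrac{2\Delta m}{k}$ yields exactly $\tfrac{c}{1-c}\tfrac{2}{k}\dm$. No further work is required beyond this substitution.

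For the second inequality, the starting point is Lemma~\ref{lemma:ps_two}. The two squared terms $(\tfrac{c}{1-c})^{2}\lVert\DASN\rVert_{1}^{2}$ and $(\tfrac{c}{1-c})^{2}\lVert\DASO\rVert_{1}^{2}$ fall out immediately: applying Lemma~\ref{lemma:ps_structure} to each and dividing by $\dt^{2}$ turns them into $(\tfrac{c}{1-c})^{2}(\tfrac{2}{k})^{2}(\dm)^{2}$, so their sum contributes the $2(\tfrac{c}{1-c})^{2}(\tfrac{2}{k})^{2}(\dm)^{2}$ term in the theorem. Here I am implicitly using the (mild) convention that $\Delta m$ on the current step dominates the previous step, or equivalently that both increments are of comparable AnomalyS-scale.

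The hard part, as I see it, is the cross term $\tfrac{c}{1-c}\lVert\DASN-\DASO\rVert_{1}/\dt^{2}$, which must be identified with $\tfrac{c}{1-c}\tfrac{2}{k}\dmm$. The key observation is that $\DASN$ and $\DASO$ are both structure-change matrices of the form produced in the proof of Lemma~\ref{lemma:ps_structure}, with only the $u$-th column nonzero and entries of magnitude $\tfrac{1}{k}$. Subtracting them leaves at most $2\Delta^{2}m$ nonzero entries of magnitude $\tfrac{1}{k}$ (where $\Delta^{2}m$ denotes the discrete second difference $\Delta m_{\text{new}}-\Delta m_{\text{old}}$ in the number of rerouted edges), so $\lVert\DASN-\DASO\rVert_{1}\le \tfrac{2\Delta^{2}m}{k}$; dividing by $\dt^{2}$ produces the $\tfrac{c}{1-c}\tfrac{2}{k}\dmm$ summand. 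Combining the three bounded summands from Lemma~\ref{lemma:ps_two} then gives the claimed inequality.
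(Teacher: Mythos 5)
Your proposal matches the paper's own argument: the paper's proof of Theorem~\ref{theorem:up_d1d2_s} is literally the one-line instruction to combine the structural bound of Lemma~\ref{lemma:ps_structure} with the general derivative bounds of Lemmas~\ref{lemma:ps_one} and~\ref{lemma:ps_two}, which is exactly what you carry out. The bookkeeping you add (bounding both $\lVert\DASN\rVert_1$ and $\lVert\DASO\rVert_1$ by $\tfrac{2\Delta m}{k}$, and reading $\lVert\DASN-\DASO\rVert_1\le\tfrac{2\Delta^2 m}{k}$ off the column structure of the change matrices) makes explicit the same conventions the paper leaves implicit, so there is no substantive difference.
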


\noindent  
Based on Theorem~\ref{theorem:up_d1d2_s}, \structure has higher upper bounds of $\lVert\psone\rVert_{1}$ and $\lVert\pstwo\rVert_{1}$ than normal streams.
This gives an intuition for why \structure results in high \anS scores (Figure~\ref{fig:perf:anomaly_score}).
We detect \structure successfully based on \anS scores in real-world graphs (Figure~\ref{fig:accuracy}).

\subsubsection{\anE}
We discretize the first and second order derivatives $\peone$ and $\petwo$ of $\pe$ as follows:
\small
\begin{align*}
	\peone &= \frac{\pe(t+\Delta t)-\pe(t)}{\dt}\\
	\petwo &= \frac{(\pe(t+\Delta t)-\pe(t)) - (\pe(t)-\pe(t - \Delta t))}{\Delta t^2}
\end{align*}
\normalsize

\noindent Then we define the second metric \anE which is designed to find \edge effectively.
\begin{defi}[\anE]
	Given \prE vector $\pe$, \anE $\ae$ is a $(n\times2)$ matrix $[\pe'~\pe'']$, concatenating 1st and 2nd derivatives of $\pe$.
	The \anE score is $\lVert\ae\rVert_{1}$.
\end{defi}

\noindent We model smoothness of $\lVert\pe(t)\rVert_{1}$ in a normal graph stream using Lipschitz continuity in Assumption~\ref{ob:e_normal}.
Then, similar to what we have shown in the previous Section~\ref{sec:anomrank:s}, we show upper bounds of $\lVert\peone\rVert_{1}$ and $\lVert\petwo\rVert_{1}$ in Lemmas~\ref{lemma:pe_one} and ~\ref{lemma:pe_two} to explain Assumption~\ref{ob:e_normal}.

\begin{assumption}[$\lVert\pe(t)\rVert_{1}$ in Normal Stream]
	\label{ob:e_normal}
	In a normal graph stream, $\lVert\pe(t)\rVert_{1}$ is Lipschitz continuous with positive real constants $C_1$ and $C_2$ such that,
	\begin{align*}
	\lVert\peone\rVert_{1} \le C_1~and~\lVert\petwo\rVert_{1} \le C_2
	\end{align*}
\end{assumption}

\begin{lemm}[Upper bound of $\lVert\peone\rVert_{1}$]
	\label{lemma:pe_one}
	Given damping factor $c$, updates $\DAE$ in the adjacency matrix, and updates $\bed$ in the starting vector during $\dt$, $\lVert\peone\rVert_{1}$ is upper-bounded by $\frac{1}{\dt}(\frac{c}{1-c}\lVert\DAE\rVert_{1}+\lVert\bed\rVert_{1})$.
	\begin{proof}
		Proofs are given in Supplement~\ref{sec:app:proofs}.
	\end{proof}
\end{lemm}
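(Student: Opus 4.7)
The plan is to start from the incremental update formula in Lemma~\ref{lemma:dynamic_pe}, which already expresses $\pe(t+\dt)-\pe(t)$ explicitly as the sum of two infinite series. I would then bound the $L1$ length of each series by exploiting sub-multiplicativity of the $L1$ matrix norm together with the fact that the updated matrix $c(\NATE+\DAE)=c\NBTE$ has $L1$ length at most $c$ (because $\NBE$ is row-normalized, so the column sums of its transpose are at most $1$, and the $L1$ length of a matrix is defined in Section~\ref{sec:preliminaries} as the maximum $L1$ length of its columns).

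More concretely, I would first write
\begin{align*}
\pe(t+\dt)-\pe(t) = \sum_{k=0}^{\infty}(c\NBTE)^{k}c\DAE\pe(t) + (1-c)\sum_{k=0}^{\infty}(c\NBTE)^{k}\bed.
\end{align*}
Applying the triangle inequality, sub-multiplicativity, and the geometric bound $\sum_{k=0}^{\infty}\lVert(c\NBTE)^{k}\rVert_{1}\le \sum_{k=0}^{\infty}c^{k}=\frac{1}{1-c}$, together with $\lVert\pe(t)\rVert_{1}=1$ (a PageRank-type probability vector), the first term is bounded by $\frac{c}{1-c}\lVert\DAE\rVert_{1}$ and the second by $(1-c)\cdot\frac{1}{1-c}\lVert\bed\rVert_{1}=\lVert\bed\rVert_{1}$. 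Summing the two contributions and dividing by $\dt$ yields the claimed bound
\[
\lVert\peone\rVert_{1}\le \frac{1}{\dt}\Bigl(\frac{c}{1-c}\lVert\DAE\rVert_{1}+\lVert\bed\rVert_{1}\Bigr).
\]

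The only mildly subtle step is justifying $\lVert c\NBTE\rVert_{1}\le c$; this is the ``main obstacle'' in the sense that the rest is essentially a one-line geometric-series calculation. The point is that $\NBE=\widetilde{\mat{B}}_w$ is row-normalized by construction (Section~\ref{sec:preliminaries}), so each row of $\NBE$ has $L1$ length at most $1$, which is exactly the statement that each column of $\NBTE$ has $L1$ length at most $1$, i.e.\ $\lVert\NBTE\rVert_{1}\le 1$. Once this is noted, the geometric series over $k$ converges uniformly and the triangle-inequality bound is tight up to the normalization of $\pe(t)$ and the unit bound on $\lVert\NBTE\rVert_{1}$. This argument is the direct analog of the one used to bound $\lVert\psone\rVert_{1}$ in Lemma~\ref{lemma:ps_one}, with an additional $\lVert\bed\rVert_{1}$ term arising from the out-degree-proportional starting vector, which is absent in \prS because $\bs$ is fixed.
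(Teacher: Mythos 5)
Your proof is correct and follows essentially the same route as the paper's: apply the incremental update formula of Lemma~\ref{lemma:dynamic_pe}, use that $\NATE+\DAE$ is column-stochastic (so the geometric series is bounded by $\frac{1}{1-c}$) and that $\lVert\pe(t)\rVert_{1}=1$, then bound the two terms by $\frac{c}{1-c}\lVert\DAE\rVert_{1}$ and $\lVert\bed\rVert_{1}$ and divide by $\dt$. No gaps; your explicit justification of $\lVert\NBTE\rVert_{1}\le 1$ just spells out the stochasticity remark the paper makes in passing.
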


\noindent In the following lemma, $\lVert\pstwo\rVert_{max}$ denotes the upper bound of $\lVert\pstwo\rVert_{1}$ which we show in Lemma~\ref{lemma:ps_two}.
$\DAEO$ is the changes in $\AE$ from time $(t-\Delta t)$ to time $t$, and $\DAEN$ is the changes in $\AE$ from $t$ to $(t+\Delta t)$.
$\bedo$ is the changes in $\be$ from time $(t-\Delta t)$ to time $t$, and $\bedn$ is the changes in $\be$ from $t$ to $(t+\Delta t)$.

\begin{lemm}[Upper bound of $\lVert\petwo\rVert_{1}$]
	\label{lemma:pe_two}
	Given damping factor $c$, sequencing updates $\DAEO$ and $\DAEN$, and sequencing updates $\bedo$ and $\bedn$,
	the upper bound of $\lVert\petwo\rVert_{1}$ is presented as follows:\\
	\small
	\begin{align*}
		\lVert\pstwo\rVert_{max} + \frac{1}{\dt^2}(\lVert\bedn-\bedo \rVert_{1} + \frac{c}{1-c}\lVert\DAEN\rVert_{1}\lVert\bedn\rVert_{1})
	\end{align*}
	\normalsize
	\begin{proof}
		Proofs are given in Supplement~\ref{sec:app:proofs}.
	\end{proof}
\end{lemm}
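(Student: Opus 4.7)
The plan is to parallel the proof of Lemma~\ref{lemma:ps_two}, but carry along the additional starting-vector increments $\bedo,\bedn$ that enter $\pe$ through Lemma~\ref{lemma:dynamic_pe}. First I would apply Lemma~\ref{lemma:dynamic_pe} twice: once with $(\DAEN,\bedn)$ to expand $\pe(t+\dt)-\pe(t)$ as $\sum_{k\ge 0}(c(\NATE+\DAEN))^{k}c\DAEN\pe(t) + (1-c)\sum_{k\ge 0}(c(\NATE+\DAEN))^{k}\bedn$, and symmetrically with $(\DAEO,\bedo)$ to expand $\pe(t)-\pe(t-\dt)$ around the earlier adjacency matrix. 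Subtracting the two expressions and dividing by $\dt^{2}$ yields a closed-form representation of $\petwo$ that splits cleanly into (A) a matrix-only piece depending on $\DAEO$, $\DAEN$, and $\pe$, plus (B) a starting-vector piece depending on $\bedo$, $\bedn$ and the two propagation matrices.

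For part (A), I would re-run essentially verbatim the manipulations from the proof of Lemma~\ref{lemma:ps_two}: telescope the inner geometric series, use $\lVert c(\NATE+\DAE)\rVert_{1}\le c$ (because $\NATE+\DAE$ is the transpose of a row-stochastic matrix), and collapse $\sum_{k\ge 0}c^{k}=\frac{1}{1-c}$. This produces the $\lVert\pstwo\rVert_{max}$ summand of the claimed bound, understood as the Lemma~\ref{lemma:ps_two} upper bound instantiated with $\DAEO,\DAEN$ in place of $\DASO,\DASN$.

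For part (B), I would decompose the $\bed$ contribution further into (i) a common-propagator subpiece $(1-c)\sum_{k\ge 0}(c(\NATE+\DAEN))^{k}(\bedn-\bedo)$, whose $L1$ norm collapses through the geometric series to $\lVert\bedn-\bedo\rVert_{1}$, and (ii) a difference-of-propagators subpiece acting on $\bedo$ (equivalently on $\bedn$). For (ii), I would invoke the telescoping identity $M^{k}-N^{k}=\sum_{j=0}^{k-1}M^{j}(M-N)N^{k-1-j}$, where $M-N$ is proportional to $c\DAEN$ after aligning the two time windows (so that both propagators are expressed relative to the same base matrix $\NATE$). Summing the resulting double geometric series in $c$ and applying submultiplicativity of the $L1$ norm then extracts exactly one factor of $\lVert\DAEN\rVert_{1}$ multiplying one factor of $\lVert\bedn\rVert_{1}$, producing the cross term $\frac{c}{1-c}\lVert\DAEN\rVert_{1}\lVert\bedn\rVert_{1}$. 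Combining the three contributions and pulling out the common $\dt^{-2}$ factor yields the claimed inequality.

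The main obstacle I anticipate is the bookkeeping in subpiece (ii): between the two time windows the ambient matrix $\NATE$ itself is updated, so the two expansions of $\pe$ involve structurally different propagators, and a naive triangle inequality produces a symmetric-looking mess with both $\lVert\DAEO\rVert_{1}\lVert\bedo\rVert_{1}$ and $\lVert\DAEN\rVert_{1}\lVert\bedn\rVert_{1}$ contributions. Aligning the two series against a common base matrix before applying the telescoping identity, and being careful about which updates have already been absorbed into $\pe(t)$ versus $\pe(t-\dt)$, is what recovers the cleaner cross term stated in the lemma.
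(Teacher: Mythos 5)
Your proposal is correct and follows essentially the same route as the paper: expand the second difference via Lemma~\ref{lemma:dynamic_pe}, bound the adjacency-update part exactly as in Lemma~\ref{lemma:ps_two} (yielding the $\lVert\pstwo\rVert_{max}$ summand), and split the starting-vector part into a common-propagator term on $\bedn-\bedo$ plus a propagator-difference cross term, where your telescoping identity $M^{k}-N^{k}=\sum_{j}M^{j}(M-N)N^{k-1-j}$ is just an unrolled version of the paper's re-application of Lemma~\ref{lemma:dynamic_ps}. The only cosmetic difference is the grouping: as you wrote it (new propagator on $\bedn-\bedo$, difference of propagators on $\bedo$) the cross term comes out as $\frac{c}{1-c}\lVert\DAEN\rVert_{1}\lVert\bedo\rVert_{1}$, so to land exactly on the stated $\lVert\bedn\rVert_{1}$ you should, as the paper does, keep the old propagator on $\bedn-\bedo$ and apply the perturbation expansion to the series acting on $\bedn$.
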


\noindent Normal graph streams have small changes (small $\lVert\DAE\rVert_{1}$ and small $\lVert\bed\rVert_{1}$) and evolve gradually (small $\lVert\bedn-\bedo\rVert_{1}$). 
Then, normal graph streams have small \anE scores under small upper bounds of $\lVert\peone\rVert_{1}$ and $\lVert\petwo\rVert_{1}$.

\begin{observation}[\edge in \anE]
	\label{ob:e_anomaly}
	\edge involves sudden edge weight changes, inducing large \anE.
\end{observation}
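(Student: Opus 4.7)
The plan is to justify Observation~\ref{ob:e_anomaly} by establishing an \edge-analogue of Theorem~\ref{theorem:up_d1d2_s}: deriving upper bounds on $\lVert\peone\rVert_{1}$ and $\lVert\petwo\rVert_{1}$ that scale with $\dm$ and $\dmm$, and then contrasting them against the Lipschitz constants in Assumption~\ref{ob:e_normal}. Since all the machinery mirrors what was done for \anS, the proposal is to chain the existing lemmas rather than redo any propagation arguments from scratch.

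First I would bound $\lVert\peone\rVert_{1}$. Lemma~\ref{lemma:pe_one} already gives $\lVert\peone\rVert_{1}\le\tfrac{1}{\dt}\bigl(\tfrac{c}{1-c}\lVert\DAE\rVert_{1}+\lVert\bed\rVert_{1}\bigr)$. Plugging in Lemma~\ref{lemma:pe_edge}, which guarantees $\lVert\DAE\rVert_{1}\le 2\Delta m/m_u$ and $\lVert\bed\rVert_{1}\le 2\Delta m/m$ whenever an edge-weight change of size $\Delta m$ occurs around a node $u$, I would obtain an upper bound of the form $\tfrac{c}{1-c}\tfrac{2}{m_u}\dm + \tfrac{2}{m}\dm$. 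This is the direct \prE counterpart of the first boxed inequality in Theorem~\ref{theorem:up_d1d2_s}, and it is linear in $\dm$.

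Second I would bound $\lVert\petwo\rVert_{1}$ by starting from Lemma~\ref{lemma:pe_two}, which decomposes the bound into $\lVert\pstwo\rVert_{\max}$ plus a correction term $\tfrac{1}{\dt^2}\bigl(\lVert\bedn-\bedo\rVert_{1}+\tfrac{c}{1-c}\lVert\DAEN\rVert_{1}\lVert\bedn\rVert_{1}\bigr)$. For the $\lVert\pstwo\rVert_{\max}$ piece I would reuse Lemma~\ref{lemma:ps_two} with $\DAEO,\DAEN$ in place of $\DASO,\DASN$ and apply Lemma~\ref{lemma:pe_edge} termwise, producing contributions in $\dmm$ and $(\dm)^2$ exactly as in Theorem~\ref{theorem:up_d1d2_s}. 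For the correction term, the cross term $\lVert\DAEN\rVert_{1}\lVert\bedn\rVert_{1}$ is bounded by $(2\Delta m/m_u)(2\Delta m/m)$, i.e.\ another $(\dm)^2$-order contribution. The difference $\bedn-\bedo$ is the finite second difference of the starting vector, so when I expand each $\bed$ via Lemma~\ref{lemma:pe_edge} the leading $\Delta m$ terms cancel and what remains is a $\Delta^2 m$ quantity, contributing a $\dmm$-order term. Combining everything gives a bound with a leading linear term in $\dmm$ plus quadratic terms in $\dm$, matching the structural boxed inequality.

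The main obstacle is the bookkeeping around $\bedn-\bedo$: unlike the purely structural case, here the starting vector itself moves with the weight update, and one must carefully track which entries of $\be$ change by $\pm\Delta m/m$ versus by $\pm\Delta m/(m{+}\Delta m)$ between the two time windows so that the cancellation that produces a genuine second-difference factor $\Delta^2 m$ (rather than two separate $\Delta m$ terms) is made rigorous. Once that bound is in place, the observation follows by contrast with Assumption~\ref{ob:e_normal}: normal streams keep $\dm$ and $\dmm$ below the Lipschitz constants $C_1,C_2$, whereas an \edge forces both finite differences to spike, driving $\lVert\ae\rVert_{1}=\max(\lVert\peone\rVert_{1},\lVert\petwo\rVert_{1})$ above the normal regime and thereby producing a large \anE score.
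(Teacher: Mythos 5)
Your proposal is correct and follows essentially the same route as the paper: the paper justifies this observation via Theorem~\ref{theorem:up_d1d2_e}, whose proof likewise chains Lemma~\ref{lemma:pe_edge} with Lemmas~\ref{lemma:pe_one}, \ref{lemma:ps_two}, and \ref{lemma:pe_two} and then contrasts the resulting $\dm$- and $\dmm$-scaled bounds with Assumption~\ref{ob:e_normal}. Your handling of the cross term and of $\bedn-\bedo$ (including the caveat about making the $\Delta^2 m$ cancellation rigorous) matches, and is if anything slightly tighter than, the paper's stated bound.
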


\noindent We explain Observation~\ref{ob:e_anomaly} by showing large upper bounds of $\lVert\peone\rVert_{1}$ and $\lVert\petwo\rVert_{1}$ induced by large values of $\dm$ and $\dmm$ in \edge.

\begin{theo}[Upper bounds of $\lVert\peone\rVert_{1}$ and $\lVert\petwo\rVert_{1}$ with \edge]
	\label{theorem:up_d1d2_e}
	When \edge occurs with large $\dm$ and $\dmm$, L1 lengths of $\peone$ and $\petwo$ are upper-bounded as follows:
	\small
	\begin{empheq}[box=\widefbox]{align}
		\lVert\peone\rVert_{1} &\le \frac{c}{1-c}\frac{2}{m_u}\dm + \frac{2}{m}\dm \nonumber\\
		\lVert\petwo\rVert_{1} &\le \frac{c}{1-c}\frac{2}{k}\dmm + \frac{2}{m}\dmm \nonumber\\
		&+ 2(\frac{c}{1-c})^2(\frac{2}{k})^2(\dm)^2 + \frac{c}{1-c}\frac{2}{m_u}\frac{2}{m}(\dm)^2 \nonumber
	\end{empheq}
	\normalsize
	\begin{proof}
		Proofs are given in Supplement~\ref{sec:app:proofs}.
	\end{proof}
\end{theo}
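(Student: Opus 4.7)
The plan is to derive both inequalities by specialising the generic bounds of Lemmas~\ref{lemma:pe_one} and~\ref{lemma:pe_two} to the quantitative estimates supplied by Lemma~\ref{lemma:pe_edge}. No fresh expansion of $\pe$ is required; everything reduces to bookkeeping in the $L_1$ norm together with the discretisations $\dm = \Delta m/\dt$ and $\dmm = \Delta^2 m/\dt^2$.

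For the first-order inequality I would invoke Lemma~\ref{lemma:pe_one} to obtain $\lVert\peone\rVert_1 \le \frac{1}{\dt}\bigl(\frac{c}{1-c}\lVert\DAE\rVert_1+\lVert\bed\rVert_1\bigr)$. An \edge event of size $\Delta m$ around a node $u$ with out-weight $m_u$ in a graph of total weight $m$ is exactly the setting of Lemma~\ref{lemma:pe_edge}, so the two $L_1$ norms are bounded by $\frac{2\Delta m}{m_u}$ and $\frac{2\Delta m}{m}$ respectively. Substituting these estimates and folding $1/\dt$ into $\Delta m$ produces the boxed first-order bound in a single step.

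For the second-order inequality I would split the event into its "old" and "new" halves $\Delta m_o$ and $\Delta m_n$ over the intervals $[t-\dt,t]$ and $[t,t+\dt]$, so that $|\Delta m_n-\Delta m_o| = \Delta^2 m$. Applying Lemma~\ref{lemma:pe_edge} separately on each interval gives the one-step bounds $\lVert\DAEO\rVert_1,\lVert\DAEN\rVert_1 \le \frac{2\Delta m}{m_u}$ and $\lVert\bedo\rVert_1,\lVert\bedn\rVert_1 \le \frac{2\Delta m}{m}$. Together with the "difference" bound $\lVert\bedn-\bedo\rVert_1 \le \frac{2|\Delta m_n-\Delta m_o|}{m}$ and the inherited contribution of $\lVert\pstwo\rVert_{max}$ from Lemma~\ref{lemma:ps_two} (which accounts for the $\frac{2}{k}$ factors through the structural side-effect of a weight change), I would plug these into Lemma~\ref{lemma:pe_two} and regroup summands according to whether they scale like $\dmm$ or like $(\dm)^2$. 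This produces exactly the four-term boxed inequality.

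The main obstacle will be controlling the differences $\lVert\DAEN-\DAEO\rVert_1$ and $\lVert\bedn-\bedo\rVert_1$, since Lemma~\ref{lemma:pe_edge} only bounds a single update, not a difference of consecutive updates. The cleanest route is to re-run the entry-wise computation used in the proof of Lemma~\ref{lemma:pe_edge} for each of the two intervals and then subtract: both updates are concentrated on the $u$-th column of the weighted adjacency matrix and on the entries of $\be$ whose out-weights change, so the triangle inequality yields a bound proportional to $|\Delta m_n-\Delta m_o|$ with the same normalisations $1/m_u$ and $1/m$. Once this refined lemma-level estimate is in hand, the rest of the theorem is a mechanical substitution, and no new error-propagation argument beyond what Lemma~\ref{lemma:pe_two} already provides is needed.
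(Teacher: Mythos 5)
Your proposal follows essentially the same route as the paper, whose entire proof is the one-line instruction to combine Lemmas~\ref{lemma:pe_edge}, \ref{lemma:ps_two} and~\ref{lemma:pe_two} (with Lemma~\ref{lemma:pe_one} supplying the first-order bound), substituting the estimates $\lVert\DAE\rVert_1 \le \frac{2\Delta m}{m_u}$ and $\lVert\bed\rVert_1 \le \frac{2\Delta m}{m}$ and regrouping terms by whether they scale as $\dmm$ or $(\dm)^2$. Your additional care in bounding the consecutive-update differences $\lVert\bedn-\bedo\rVert_1$ and $\lVert\DAEN-\DAEO\rVert_1$ proportionally to $\Delta^2 m$ merely makes explicit a step the paper leaves implicit, so the two arguments coincide.
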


\noindent  
With high upper bounds of $\lVert\peone\rVert_{1}$ and $\lVert\petwo\rVert_{1}$, shown in Theorem~\ref{theorem:up_d1d2_e}, \edge has high \anE scores (Figure~\ref{fig:perf:anomaly_score}).
We detect \edge successfully based on \anE scores in real-world graphs (Figure~\ref{fig:accuracy}).

\begin{algorithm} [t!]
	\small
	\begin{algorithmic}[1]
		\caption{\method}
		\label{alg:main}
		\REQUIRE updates in a graph: $\Delta G$, previous \prSE: $\ps^{old}, \pe^{old}$
		\ENSURE anomaly score: $s_{\text{anomaly}}$, updated \prSE: $\ps^{new}, \pe^{new}$
		\STATE compute updates $\DAS, \DAE$ and $\bed$ \label{line:updateAb}
		\STATE compute $\ps^{new}$ and $\pe^{new}$ incrementally from $\ps^{old}$ and $\pe^{old}$ using $\DAS, \DAE$ and $\bed$ \label{line:updatep}
		\STATE $s_{\text{anomaly}} = \textit{ComputeAnomalyScore}(\ps^{new}, \pe^{new})$ \label{line:score}
		\BlankLine
		\RETURN $s_{\text{anomaly}}$
	\end{algorithmic}
\end{algorithm}
\begin{algorithm} [t!]
	\small
	\begin{algorithmic}[1]
		\caption{ComputeAnomalyScore}
		\label{alg:anom_score}
		\REQUIRE \prS and \prE vectors: $\ps, \pe$
		\ENSURE anomaly score: $s_{\text{anomaly}}$
		\STATE compute \anS $\as = [\psone~\pstwo]$\label{line:anoms}
		\STATE compute \anE $\ae = [\peone~\petwo]$\label{line:anomw}
		\STATE $s_{\text{anomaly}} = \lVert\a\rVert_{1} = \max(\lVert\as\rVert_{1}, \lVert\ae\rVert_{1})$\label{line:finalscore}
		\BlankLine
		\RETURN $s_{\text{anomaly}}$
	\end{algorithmic}
\end{algorithm}

\subsection{Algorithm}
\label{sec:algorithm}
Algorithm~\ref{alg:main} describes how we detect anomalies in a dynamic graph. 
We first calculate updates $\DAS$ in the unweighted adjacency matrix, updates $\DAE$ in the weighted adjacency matrix, and updates $\bed$ in the out-edge proportional starting vector (Line \ref{line:updateAb}).
These computations are proportional to the number of edge changes, taking a few milliseconds for small changes.
Then, \method updates \prS and \prE vectors using the update rules in Lemmas~\ref{lemma:dynamic_ps} and ~\ref{lemma:dynamic_pe} (Line \ref{line:updatep}). 
Then \method calculates an anomaly score given \prS and \prE in Algorithm~\ref{alg:anom_score}.
\method computes \anS and \anE, and returns the maximum L1 length between them as the anomaly score. 

\textbf{Normalization}:
As shown in Theorems~\ref{theorem:up_d1d2_s} and~\ref{theorem:up_d1d2_e}, the upper bounds of \anS and \anE are based on the number of out-neighbors $k$ and the number of out-edge weights $m_u$.
This leads to skew in anomalousness score distributions since many real-world graphs have skewed degree distributions.
Thus, we normalize each node's \anS and \anE scores by subtracting its mean and dividing by its standard deviation, which we maintain along the stream.

\textbf{Explainability and Attribution}:
\method explains the type of anomalies by comparing \anS and \anE:
higher scores of \anS suggest that \structure has happened, and vice versa. 
High scores of both metrics indicate a large edge weight change that also alters the graph structure.
Furthermore, we can localize culprits of anomalies by ranking \method scores of each node in the score vector, as computed in Lines \ref{line:anoms} and \ref{line:anomw} of Algorithm \ref{alg:anom_score}. 
We show this localization experimentally in Section~\ref{sec:exp:local}.

\textbf{$\Delta t$ selection}:
Our analysis and proofs, hold for any value of $\Delta t$.
The choice of $\Delta t$ is outside the scope of this paper, and probably best decided by a domain expert:
large $\Delta t$ is suitable for slow ('low temperature') attacks;
small $\Delta t$ spots fast and abrupt attacks.
In our experiments, we chose $\Delta t = 1$ hour, and $1$ day, respectively, for a computer-network intrusion setting, and for a who-emails-whom network. 

\section{Experiments}
\label{sec:experiments}
In this section, we evaluate the performance of \method compared to state-of-the-art anomaly detection methods on dynamic graphs.
We aim to answer the following questions:
\begin{itemize} [leftmargin=10pt]
	\item {\textbf{Q1. Practicality.}} 
	How fast, accurate, and scalable is \method compared to its competitors? (Section~\ref{sec:exp:prac})
	\item {\textbf{Q2. Effectiveness of two-pronged approach.}} 
	How do our two metrics, \anS and \anE, complement each other in real-world and synthetic graphs? (Section~\ref{sec:exp:two})
	\item {\textbf{Q3. Effectiveness of two-derivatives approach.}}
	How do the 1st and 2nd order derivatives of \prS and \prE complement each other? (Section~\ref{sec:exp:eval})
	\item {\textbf{Q4. Attribution.}} 
	How accurately does \method localize culprits of anomalous events? (Section~\ref{sec:exp:local})
\end{itemize}

\subsection{Setup}
We use SedanSpot~\cite{eswaran2018sedanspot} and DenseAlert~\cite{shin2017densealert}, state-of-the-art anomaly detection methods on dynamic graphs as our baselines.
We use two real-world dynamic graphs, \textit{DARPA} and \textit{ENRON}, and one synthetic dynamic graph, \textit{RTM}, with two anomaly injection scenarios.
Anomalies are verified by comparing to manual annotations or by correlating with real-world events.
More details of experimental settings and datasets are described in Supplement~\ref{sec:app:exp} and~\ref{sec:app:dataset}.

\begin{figure}[!t]
	\centering
	\includegraphics[width=.8\linewidth]{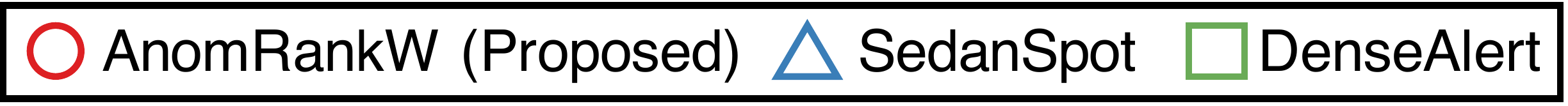}
	\subfigure[Precision@50 vs. time taken]
	{
		\label{fig:accuracy:2}
		\includegraphics[width=.37\linewidth]{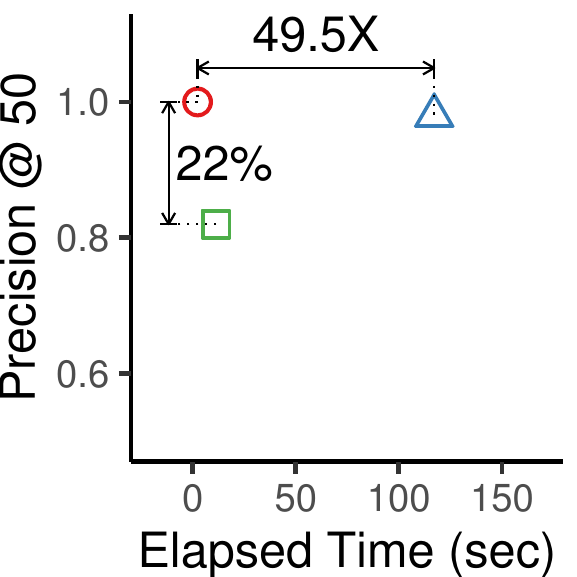}	
	}
	\subfigure[Precision@150 vs. time taken]
	{
		\label{fig:accuracy:3}
		\includegraphics[width=.37\linewidth]{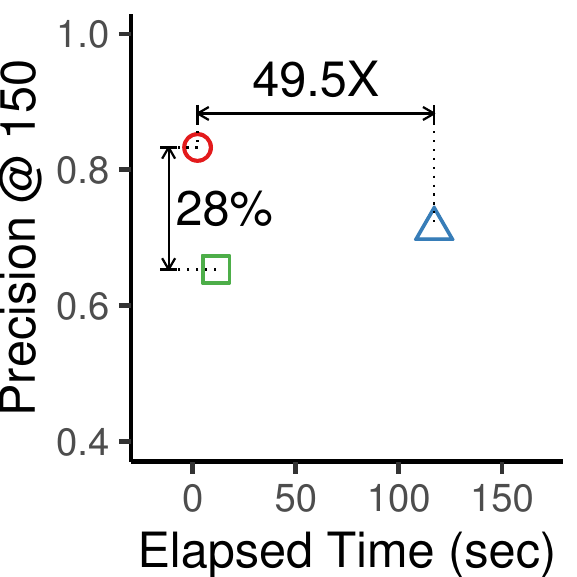}
	} 
	\subfigure[Precision@200 vs. time taken]
	{
		\label{fig:accuracy:4}
		\includegraphics[width=.37\linewidth]{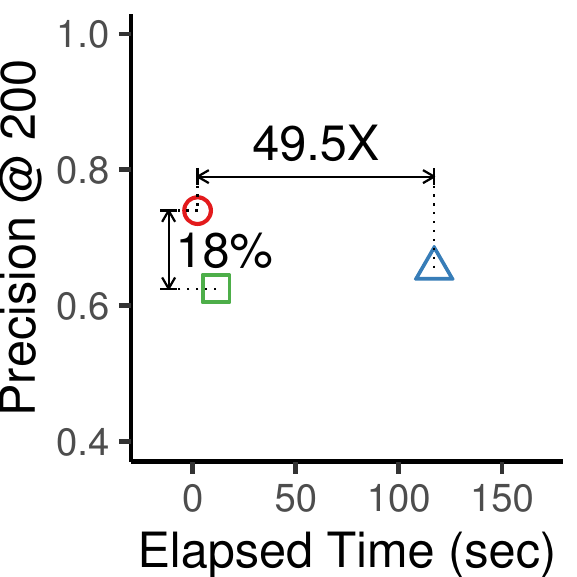}	
	}
	\subfigure[Precision vs. Recall]
	{
		\label{fig:accuracy:1}
		\includegraphics[width=.39\linewidth]{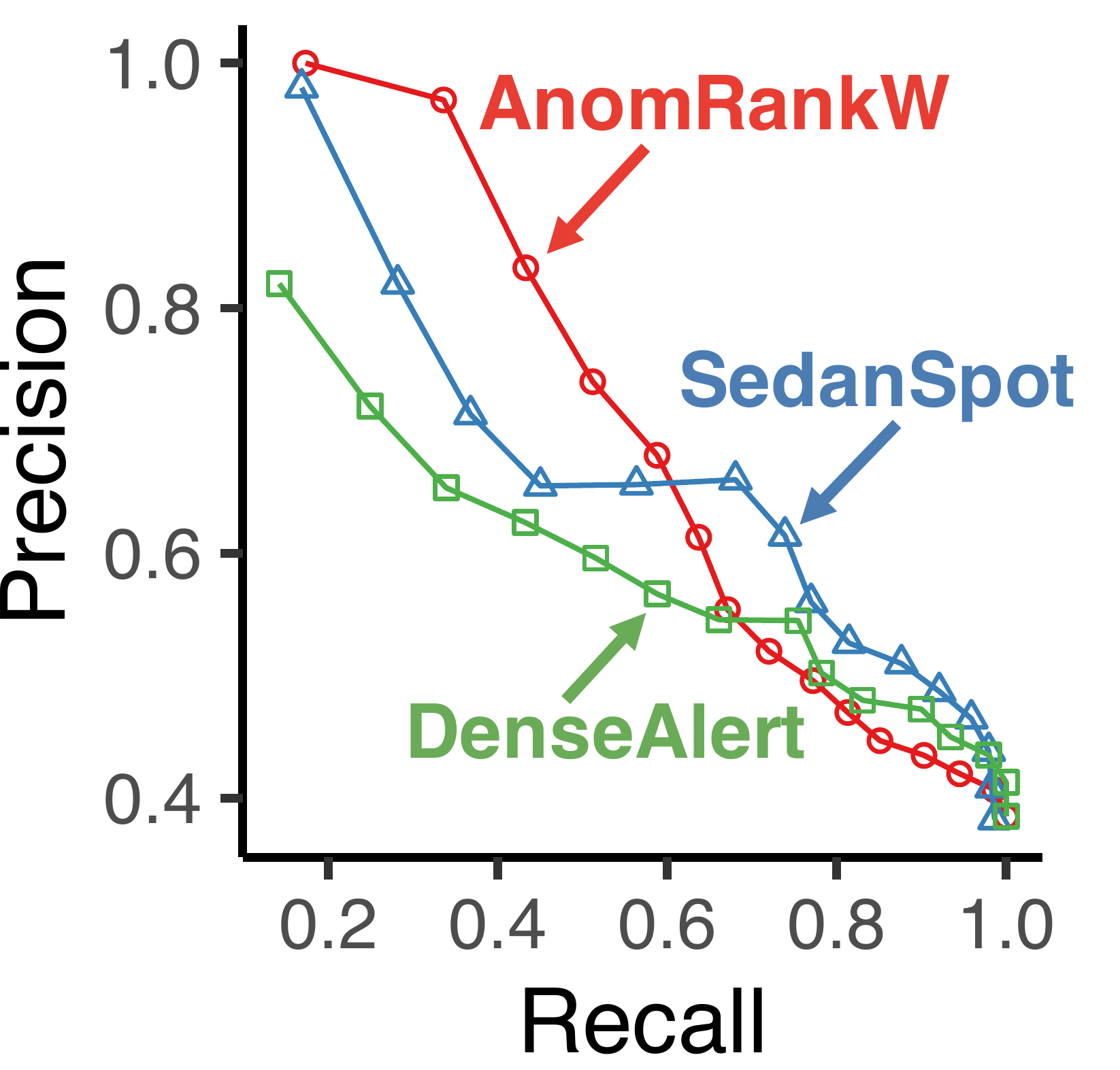}
	}
	\caption
	{ 
		\underline{\smash{\method is fastest and most accurate}}:
		(a-c) \method outperforms the baselines in terms of both accuracy and speed on the {\it DARPA} dataset.
		(d) \method achieves better precision and recall on high ranks (top-$50,\dots,250$) than its competitors.
	}
	\label{fig:accuracy}
\end{figure}

\subsection{Practicality}
\label{sec:exp:prac}
We examine the practicality of \method on the {\it DARPA} dataset, a public benchmark for Network Intrusion Detection Systems.
In this network intrusion setting, our focus is on detecting high-volume (i.e. high edge-weight) intrusions such as denial of service (DOS) attacks, which are typically the focus of graph mining-based detection approaches. 
Hence, we use only the \anE metric. 

\textbf{Precision and Recall:}
Using each method, we first compute anomaly scores for each of the $1139$ graph snapshots, then select the top-k most anomalous snapshots ($k=50,100,\cdots, 800$).
Then we compute precision and recall for each method's output. 
In Figure~\ref{fig:accuracy:1}, 
\method shows higher precision and recall than DenseAlert and SedanSpot on high ranks (top-$50,\dots,250$).
Considering that anomaly detection tasks in real-world settings are generally focused on the most anomalous instances, high accuracy on high ranks is more meaningful than high accuracy on low ranks.
Moreover, considering that the number of ground truth anomalies is $288$, its precision and recall up to top-$250$ better reflects its practicality.

\textbf{Accuracy vs. Running Time:}
In Figures~\ref{fig:perf:running_time} and~\ref{fig:accuracy}(a-c), \method is most accurate and fastest.
Compared to SedanSpot, \method achieves up to $18\%$ higher precision on top-k ranks with $49.5 \times$ faster speed. 
Compared to DenseAlert, \method achieves up to $35\%$ higher precision with $4.8\times$ faster speed.
DenseAlert and SedanSpot require several subprocesses (hashing, random-walking, reordering, sampling, etc), resulting in large computation time.

\textbf{Scalability:}
Figure~\ref{fig:perf:scalability} shows the scalability of \method with the number of edges.
We plot the wall-clock time needed to run on the (chronologically) first $2, 2^2, \dots, 2^{22}$ edges of the {\it DARPA} dataset. 
This confirms the linear scalability of \method with respect to the number of edges in the input dynamic graph.
Note that \method processes $1M$ edges within $1$ second, allowing real-time anomaly detection.

\textbf{Effectiveness:}
Figure~\ref{fig:perf:anomaly_score} shows changes of \method scores in the {\it DARPA} dataset, with time window of $\Delta T = 1$ hour.
Consistently with the large upper bounds shown in Theorems \ref{theorem:up_d1d2_s} and \ref{theorem:up_d1d2_e}, ground truth attacks (red crosses) have large \method scores in Figure~\ref{fig:perf:anomaly_score}.
Given {\it mean} and {\it std} of anomaly scores of all snapshots, setting an anomalousness threshold of ({\it mean + $\frac{1}{2}$std}), $77\%$ of spikes above the threshold are true positives.
This shows the effectiveness of \method as a barometer of anomalousness.

\subsection{Effectiveness of Two-Pronged Approach}
\label{sec:exp:two}
In this experiment, we show the effectiveness of our two-pronged approach using real-world and synthetic graphs.

\subsubsection{Real-World Graph}
We measure anomaly scores based on four metrics: \anS, \anE, SedanSpot, and DenseAlert, on the \textit{ENRON} dataset.
In Figure~\ref{fig:enron}, \anE and SedanSpot show similar trends, while \anS detects different events as anomalies on the same dataset.
DenseAlert shows similar trends with the sum of \anS and \anE, while missing several anomalous events. 
This is also reflected in the low accuracy of DenseAlert on the \textit{DARPA} dataset in Figure~\ref{fig:accuracy}.
The anomalies detected by \anS and \anE coincide with major events in the {\it ENRON} timeline\footnote{http://www.agsm.edu.au/bobm/teaching/BE/Enron/timeline.html} as follows:

\begin{enumerate} [leftmargin=10pt]
	\footnotesize
	\item June 12, 2000: Skilling makes joke at Las Vegas conference, comparing California to the Titanic.
	\item August 23, 2000: FERC orders an investigation into Timothy Belden's strategies designed to drive electricity prices up in California.
	\item Oct. 3, 2000: Enron attorney Richard Sanders travels to Portland to discuss Timothy Belden's strategies.
	\item Dec. 13, 2000: Enron announces that Jeffrey Skilling will take over as chief executive.
	\item Mar. 2001: Enron transfers large portions of EES business into wholesale to hide EES losses.
	\item July 13, 2001: Skilling announces desire to resign to Lay. Lay asks Skilling to take the weekend and think it over.
	\item Oct. 17, 2001: Wall Street Journal reveals the precarious nature of Enron's business. 
	\item Nov. 19, 2001: Enron discloses it is trying to restructure a 690 million obligation.
	\item Jan. 23-25, 2002: Lay resigns as chairman and CEO of Enron. Cliff Baxter, former Enron vice chairman, commits suicide.
	\item Feb. 2, 2002: The Powers Report, a summary of an internal investigation into Enron's collapse spreads out.
\end{enumerate}
	\normalsize

\begin{figure}[!t]
	\includegraphics[width=1\linewidth]{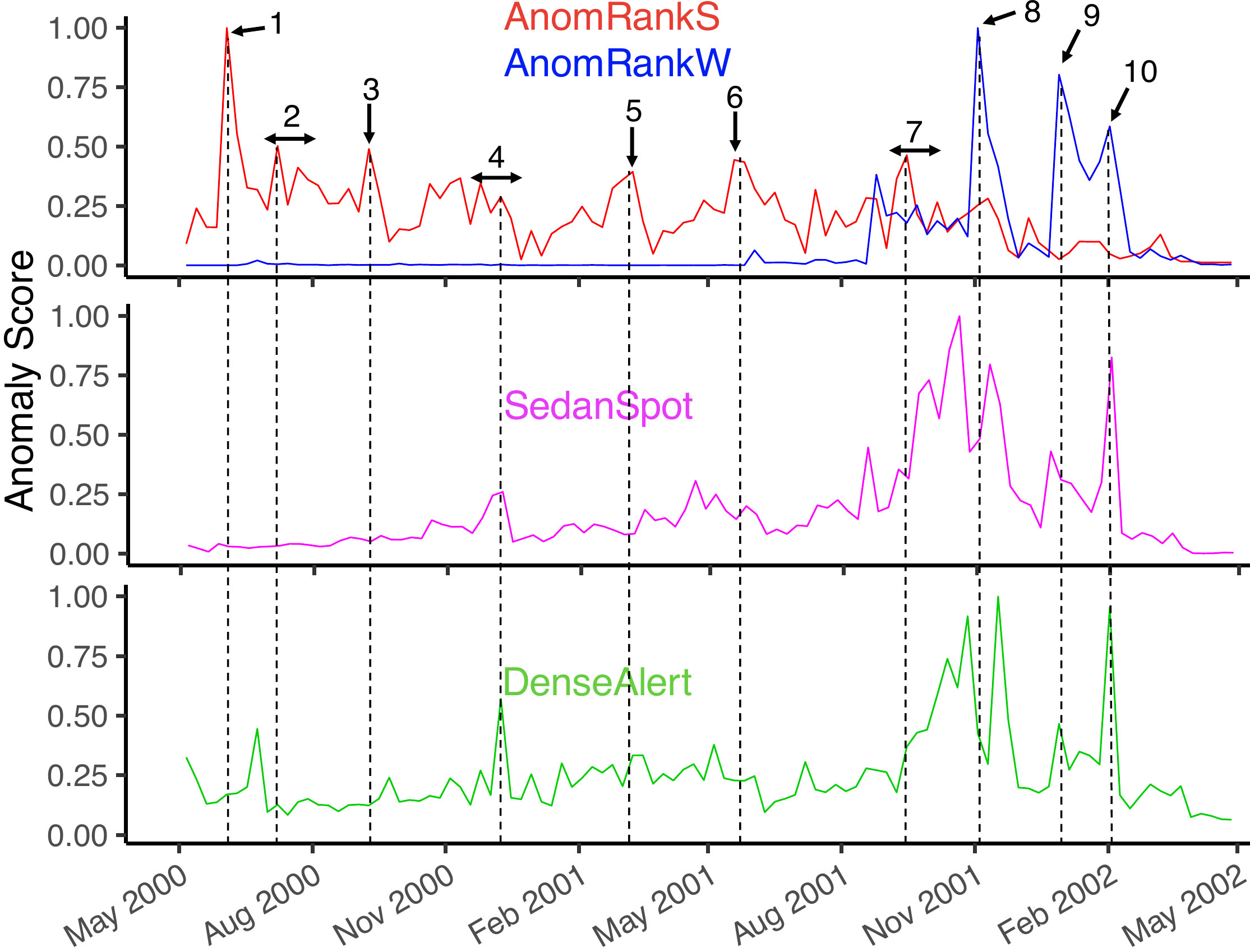}
	\caption
	{ 
		\label{fig:enron}
		\underline{\smash{Two-pronged approach pays off}}:
		\anE and SedanSpot show similar trends on the Enron dataset, while \anS detects different events as anomalies on the same dataset.
		DenseAlert shows similar trends with the sum of \anS and \anE while missing several anomalous events.
	}
\end{figure}

\noindent The high anomaly scores of \anS coincide with the timeline events when Enron conspired to drive the California electricity price up (Events 1,2,3) and office politics played out in Enron (Events 4,5,6).
Meanwhile, \anE shows high anomaly scores when the biggest scandals of the company continued to unfold (Events 7, 8,9,10).
Note that \anS and \anE are designed to detect different properties of anomalies on dynamic graphs.
\anS is effective at detecting \structure like unusual email communications for conspiration, while \anE is effective at detecting \edge like massive email communications about big scandals that swept the whole company.
The non-overlapping trends of \anS and \anE show that the two metrics complement each other successfully in real-world data.
Summarizing the two observations we make:
\begin{itemize} [leftmargin=10pt]
	\item \textbf{Observation 1}.
		\anS and \anE spot different types of anomalous events.
	\item \textbf{Observation 2}.
		DenseAlert and SedanSpot detect a subset of the anomalies detected by \method.
\end{itemize}

\subsubsection{Synthetic Graph} \label{sec:synthetic}

In our synthetic graph generated by \emph{RTM} method, we inject two types of anomalies to examine the effectiveness of our two metrics.
Details of the injections are as follows:

\begin{itemize} [leftmargin=10pt]
	\item \textbf{\sceneS}: We choose $50$ timestamps uniformly at random: at each chosen timestamp, we select $8$ nodes uniformly at random, and introduce all edges between these nodes in both directions. 
	\item \textbf{\sceneE}: We choose $50$ timestamps uniformly at random: at each chosen timestamp, we select two nodes uniformly at random, and add $70$ edges from the first to the second. 
\end{itemize}

\noindent A clique is an example of \structure with unusual structure pattern, while high edge weights are an example of \edge.
Hence, \sceneS and \sceneE are composed of \structure and \edge, respectively.

Then we evaluate the precision of the top-50 highest anomaly scores output by the \anS metric and the \anE metric.
We also evaluate each metric on the \emph{DARPA} dataset based on their top-250 anomaly scores. 
In Table~\ref{tab:metric_se}, \anS shows higher precision on \sceneS than \anE, while \anE has higher precision on \sceneE and \emph{DARPA}.
In Section~\ref{sec:suitable}, we showed theoretically that \structure induces larger changes in \anS than \anE, explaining the higher precision of \anS than \anE on \sceneS.
We also showed that adding additional edge weights has no effect on \anS, explaining that \anS does not work on \sceneE.
For the {\it DARPA} dataset, \anE shows higher accuracy than \anS.
{\it DARPA} contains 2.7M attacks, and $90\%$ of the attacks (2.4M attacks) are DOS attacks generated from only 2-3 source IP adderesses toward 2-3 target IP addresses.
These attacks are of \edge type with high edge weights.
Thus \anE shows higher precision on {\it DARPA} than \anS.
 
 \begin{table}[]
 	\centering
 	\caption{
 		\underline{\smash{\anS and \anE complement each other}}:\\
 		we measure precision of the two metrics on the synthetic graph with two injection scenarios (\sceneS, \sceneE) and the real-world graph \textit{DARPA}.
 		We measure precision on top-50 and top-250 ranks on the synthetic graph and \textit{DARPA}, respectively.
 	}
 	\label{tab:metric_se}
 	\begin{tabular}{c|c|c|c}
 		\hline	
 		\diagbox[width=2.8cm, height=0.8cm]{Metric}{Dataset} & \sceneS & \sceneE & {\it DARPA} \\ \hline
 		$\anS$ only& \textbf{.96} & .00 & .42 \\ \hline
 		$\anE$ only& .82 & \textbf{.79} & \textbf{.69} \\ \hline
 	\end{tabular}
 \end{table}
 
 \begin{table}[]
 	\centering
 	\caption{
 		\underline{\smash{1st and 2nd order derivatives complement each other}}:\\
 		\anEF and \anES are 1st and 2nd derivatives of \prE.
 		Combining \anEF and \anES results in the highest precision.
 	}
 	\label{tab:metric_12}
 	\begin{tabular}{c|c|c|c}
 		\hline	
 		\diagbox[width=2.8cm, height=0.8cm]{Metric}{Dataset} & \sceneS & \sceneE & {\it DARPA} \\ \hline
 		\anEF& .06 & .11 & .65 \\
 		\anES& .80 & .78 & .61 \\ \hline
 		\anE & {\bf .82} & {\bf .79} & {\bf .69} \\ \hline
 	\end{tabular}
 \end{table}
 
\subsection{Effectiveness of Two-Derivatives Approach}
\label{sec:exp:eval}
In this experiment, we show the effectiveness of 1st and 2nd order derivatives of \prS and \prE in detecting anomalies in dynamic graphs.
For brevity, we show the result on \prE; result on \prS is similar.
Recall that \anE score is defined as the L1 length of $\ae = [\peone~\petwo]$ where $\peone$ and $\petwo$ are the 1st and 2nd order derivatives of \prE, respectively.
We define two metrics, \anEF and \anES, which denote the L1 lengths of $\peone$ and $\petwo$, respectively.
By estimating precision using \anEF and \anES individually, we examine the effectiveness of each derivative using the same injection scenarios and evaluation approach as Section \ref{sec:synthetic}.

In Table~\ref{tab:metric_12}, \anEF shows higher precision on the \textit{DARPA} dataset, while \anES has higher precision on injection scenarios.
\anEF detects suspiciously large anomalies based on L1 length of 1st order derivatives, while \textsc{AnomRa} \textsc{nkW-2nd} detects abruptness of anomalies based on L1 length of 2nd order derivatives.
Note that combining 1st and 2nd order derivatives leads to better precision.
This shows that 1st and 2nd order derivatives complement each other.

\subsection{Attribution}
\label{sec:exp:local}
In this experiment, we show that \method successfully localizes the culprits of anomalous events as explained in the last paragraph of Section~\ref{sec:algorithm}. 
In Figure~\ref{fig:local}, given a graph snapshot detected as an anomaly in the {\it DARPA} dataset, nodes (IP addresses) are sorted in order of their \method scores.
Outliers with significantly large scores correspond to IP addresses which are likely to be engaged in network intrusion attacks.
At the $15$th snapshot ($T=15$) when {\it Back} DOS attacks occur, the attacker IP ($135.008.060.182$) and victim IP ($172.016.114.050$) have the largest \method scores.
In the $133$th snapshot ($T=133$) where {\it Nmap} probing attacks take place, the victim IP ($172.016.112.050$) has the largest score.

\section{Conclusion}
\label{sec:conclusion}
In this paper, we proposed a two-pronged approach for detecting anomalous events in a dynamic graph. 

Our main contributions are:
\begin{itemize}[leftmargin=10pt]
	\item 
	{\textbf{Online, Two-Pronged Approach}
		We introduced \method, a novel and simple detection method in dynamic graphs.
	}
	\item {
	\textbf{Theoretical Guarantees}
	We present theoretical analysis (Theorems \ref{theorem:up_d1d2_s} and \ref{theorem:up_d1d2_e}) on the effectiveness of \method.
	}
	\item {
	\textbf{Practicality}
	In Section \ref{sec:experiments}, we show that \method outperforms state-of-the-art baselines, with up to $\mathit{49.5\times}$ faster speed or $\mathit{35\%}$ higher accuracy. 
	\method is fast, taking about {\it 2} seconds on a graph with 4.5 million edges. 
	}
\end{itemize}
Our code and data are publicly available\codeurl.

\begin{figure}[!t]
	\includegraphics[width=1\linewidth]{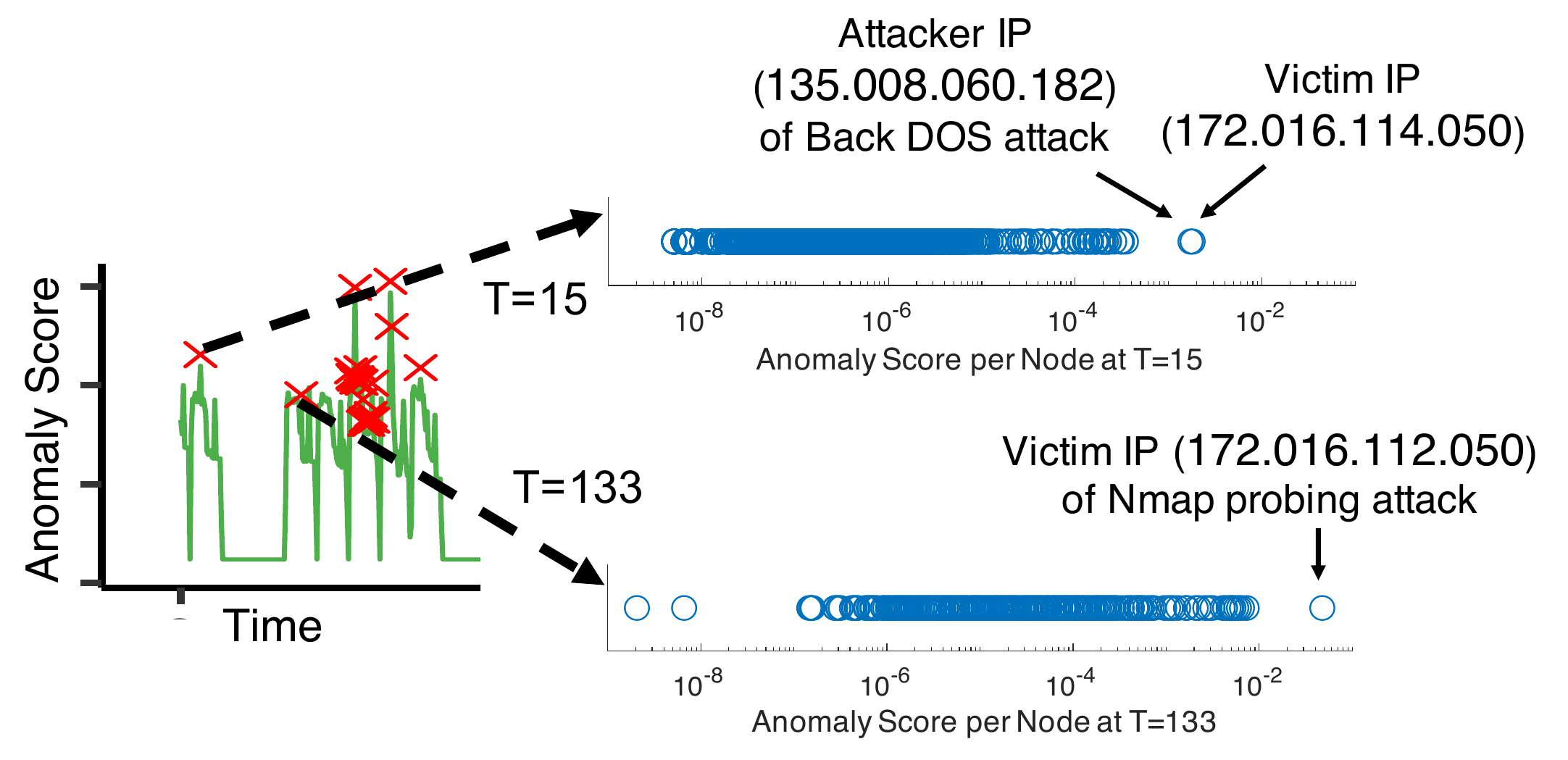}
	\caption
	{ 
		\label{fig:local}
		\underline{\smash{Attribution}}:
		\method localizes the culprits of anomalous events in the {\it DARPA} dataset:
		in a {\it Back} DOS attack, the attacker and victim IP have the top-2 largest scores;
		in an {\it Nmap} probing attack, the victim IP has the largest \method score.
	}
\end{figure}

\section*{Acknowledgment}
\label{sec:ack}
This material is based upon work supported by the National Science Foundation under Grants No. CNS-1314632 and IIS-1408924. 
Any opinions, findings, and conclusions or recommendations expressed in this material are those of the author(s) and do not necessarily reflect the views of the National Science Foundation, or other funding parties.
The U.S. Government is authorized to reproduce and distribute reprints for Government purposes notwithstanding any copyright notation here on.

\bibliographystyle{ACM-Reference-Format}
\bibliography{BIB/myref,BIB/additions}

\newpage
\appendix
\section{Supplement}
\label{sec:appendix}
\subsection{Experimental Setting}
\label{sec:app:exp}
All experiments are carried out on a 3 GHz Intel Core i5 iMac, 16 GB RAM, running OS X 10.13.6.
We implemented \method and SedanSpot in C++, and we used an open-sourced implementation of DenseAlert\footnote{https://github.com/kijungs/densealert}, provided by the authors of~\cite{shin2017densealert}.
To show the best trade-off between speed and accuracy, we set the sample size to 50 for SedanSpot and follow other parameter settings as suggested in the original paper~\cite{eswaran2018sedanspot}.
For \method, we set the damping factor $c$ to $0.5$, and stop iterations for computing node score vectors when $L1$ changes of node score vectors are less than $10^{-3}$.

\subsection{Dataset}
\label{sec:app:dataset}
\textit{\textbf{DARPA}}~\cite{lippmann1999results}
has 4.5M IP-IP communications between 9.4K source IP and 2.3K destination IP over 87.7K minutes.
Each communication is a directed edge ({\it srcIP, dstIP, timestamp, attack}) where the attack label indicates whether the communication is an attack or not.
We aggregate edges occurring in every hour, resulting in a stream of 1463 graphs.
We annotate a graph snapshot as anomalous if it contains at least 50 attack edges.
Then there are 288 ground truth anomalies ($23.8\%$ of total).
We use the first 256 graphs for initializing means and variances needed during normalization (as described in Section~\ref{sec:algorithm}).

\noindent\textit{\textbf{ENRON}}~\cite{shetty2004enron}
contains 50K emails from 151 employees over 3 years in the ENRON Corporation.
Each email is a directed edge ({\it sender, receiver, timestamp}).
We aggregate edges occurring in every day duration, resulting in a stream of 1139 graphs.
We use the first 256 graphs for initializing means and variances.

\noindent\textbf{RTM method}~\cite{akoglu2008rtm}
generates time-evolving graphs with repeated Kronecker products.
We use the publicly available code\footnote{\url{www.alexbeutel.com/l/www2013}}.
The generated graph is a directed graph with 1K nodes and 8.1K edges over 2.7K timestamps.
We use the first 300 timestamps for initializing means and variances.

\subsection{Proofs}
\label{sec:app:proofs}
We prove upper bounds on the 1st and 2nd derivatives of \prS and \prE, showing their effectiveness in detecting \structure and \edge. 
	\begin{proof} [Proof of Lemma~\ref{lemma:ps_one} (Upper bound of $\lVert\psone\rVert_{1}$)] $\\$
		\label{proof:ps_one}
		For brevity, $\psn \leftarrow \ps(t+\dt), \pso \leftarrow \ps(t)$.
		By Lemma~\ref{lemma:dynamic_ps}, $\lVert\psn - \pso\rVert_{1}$ is presented as follows:
		\small
		\begin{align*}
			\lVert\psn - \pso\rVert_{1} &= \lVert\sum_{k=0}^{\infty}c^k(\NATS+\DAS)^kc(\DAS\pso)\rVert_{1}\\
			&\le c\sum_{k=0}^{\infty}\lVert c^k(\NATS+\DAS)^k\rVert_{1}\lVert\DAS\pso\rVert_{1}\\
			&\le \frac{c}{1-c}\lVert\DAS\pso\rVert_{1} \le \frac{c}{1-c}\lVert\DAS\rVert_{1}\\
			\lVert\psone\rVert_{1} &= \lVert\frac{\psn - \pso}{\dt}\rVert_{1} \le \frac{c}{1-c}\lVert\frac{\DAS}{\dt}\rVert_{1}
		\end{align*}
		\normalsize
		Note that $\lVert(\NATS+\DAS)^k\rVert_{1} = \lVert\pso\rVert_{1} = 1$ since $(\NATS+\DAS)$ is a column-normalized stochastic matrix, and $\pso$ is a PageRank vector.
	\end{proof}

	\begin{proof}[Proof of Lemma~\ref{lemma:ps_two} (Upper bound of $\lVert\pstwo\rVert_{1}$)] $\\$
		\label{proof:ps_two}
		For brevity, $\p_0 \leftarrow \ps(t-\Delta t), \p_1 \leftarrow \ps(t), \p_2 \leftarrow \ps(t+\Delta t),  \Delta\p^o \leftarrow \p_1-\p_0, \Delta\p^n \leftarrow \p_2-\p_1, \A \leftarrow \NATS, \DA_1 \leftarrow \DASO$ and $\DA_2 \leftarrow \DASN$.
		In addition, we omit $c$ by substituting $\A \leftarrow c\A$ and $\DA \leftarrow c\DA$ during this proof.
		By Lemma~\ref{lemma:dynamic_ps}, $\Delta\p^n$ is:
		\small
		\begin{align*}
		\Delta\p^n &= \sum_{k=0}^{\infty}(\A + \DA_1 + \DA_2)^k(\DA_2\p_1)
		\end{align*}
		\normalsize
		$\Delta\p^n$ can be viewed as an updated \prS with the original adjacency matrix $Y_1 = (\A + \DA_1)$, the update $\DA_2$, and the starting vector $(\DA_2\p_1)$ from an original vector $\p_{temp}=\sum_{k=0}^{\infty}Y_1^k(\DA_2\p_1)$.
		Then, by Lemma~\ref{lemma:dynamic_ps}, $\Delta\p^n$ is presented as follows:
		\small
		\begin{align*}
		\Delta\p^n &= \p_{temp} + \sum_{k=0}^{\infty}(Y_1 + \DA_2)^k\DA_2\p_{temp}\\
		&= \sum_{k=0}^{\infty}Y_1^k(\DA_2\p_1)+ \sum_{k=0}^{\infty}(Y_1 + \DA_2)^k\DA_2\sum_{i=0}^{\infty}Y_1^i(\DA_2\p_1)
		\end{align*}
		\normalsize
		Then $\Delta\p^n - \Delta\p^o$ becomes as follows: 
		\small
		\begin{align*}
		\Delta\p^o &= \sum_{k=0}^{\infty}(\A + \DA_1)^k(\DA_1\p_0) = \sum_{k=0}^{\infty}Y_1^k(\DA_1\p_0)\\
		\Delta\p^n - \Delta\p^o = &\sum_{k=0}^{\infty}Y_1^k(\DA_2-\DA_1)\p_1 + \sum_{k=0}^{\infty}Y_1^k\DA_1(\p_1-\p_0) \\
		&+ \sum_{k=0}^{\infty}(Y_1 + \DA_2)^k\DA_2\sum_{i=0}^{\infty}Y_1^i(\DA_2\p_1)
		\end{align*}
		\normalsize
		Since $\p_1-\p_0 = \Delta\p^o = \sum_{k=0}^{\infty}Y_1^k(\DA_1\p_0)$, $\\$
		the second term $\sum_{k=0}^{\infty}Y_1^k\DA_1(\p_1-\p_0)$ in the equation above is:
		\small
		\begin{align*}
		\sum_{k=0}^{\infty}Y_1^k\DA_1(\p_1-\p_0) = \sum_{k=0}^{\infty}Y_1^k\DA_1 \sum_{i=0}^{\infty}Y_1^i(\DA_1\p_0)
		\end{align*}
		\normalsize
		Note that$\lVert\p_0\rVert_{1} = \lVert\p_1\rVert_{1} = 1$ since $\p_0$ and $\p_1$ are PageRank vectors.
		Recovering $c$ from $\A$ and $\DA$, $\lVert\sum_{k=0}^{\infty}Y_1^k\rVert_{1}$ and $\lVert\sum_{k=0}^{\infty}(Y_1+\DA_2)^k\rVert_{1}$ becomes as follows:
		\small
		\begin{align*}
		\lVert\sum_{k=0}^{\infty}Y_1^k\rVert_{1} &= \lVert \sum_{k=0}^{\infty}c^k(\A + \DA_1)^k\rVert_{1} = \frac{1}{1-c} \\
		\lVert\sum_{k=0}^{\infty}(Y_1+\DA_2)^k\rVert_{1} &= \lVert \sum_{k=0}^{\infty}c^k(\A + \DA_1 + \DA_2)^k\rVert_{1} = \frac{1}{1-c} 
		\end{align*}
		\normalsize
		Note that $\A+\DA_1$ and $\A+\DA_1+\DA_2$ are column-normalized stochastic matrices.
		Then $\lVert\Delta\p^n - \Delta\p^o\rVert_{1}$ is bounded as follow:
		\small
		\begin{align*}
		\lVert\Delta\p^n - \Delta\p^o\rVert_{1} \le \frac{c}{1-c}\lVert\DA_2-\DA_1 \rVert_{1} + (\frac{c}{1-c})^2(\lVert\DA_1\rVert_{1}^2 + \lVert\DA_2\rVert_{1}^2)
		\end{align*}
		\normalsize
		Then, recovering $c$ from all terms, $\lVert\pstwo\rVert_{1}$ is bounded as follows:
		\small
		\begin{align*}
		&\lVert\pstwo\rVert_{1} = \frac{\lVert\Delta\p^n - \Delta\p^o\rVert_{1}}{\dt^2} \\
		&\le \frac{\frac{c}{1-c}\lVert\DA_2-\DA_1 \rVert_{1} + (\frac{c}{1-c})^2(\lVert\DA_1\rVert_{1}^2 + \lVert\DA_2\rVert_{1}^2)}{\dt^2}
		\end{align*}
		\normalsize
	\end{proof}

	\begin{proof}[Proof of Theorem~\ref{theorem:up_d1d2_s}] (Upper bounds of $\lVert\psone\rVert_{1}$ and $\lVert\pstwo\rVert_{1}$ with \structure)
		\label{proof:up_d1d2_s}	
		Use Lemma~\ref{lemma:ps_structure} and~\ref{lemma:ps_one}.
	\end{proof}

	\begin{proof}[Proof of Lemma~\ref{lemma:pe_one} (Upper bounds of $\lVert\peone\rVert_{1}$)] $\\$
		\label{proof:pe_one}
		For brevity, denote $\peo \leftarrow \pe(t)$ and $\pen \leftarrow \pe(t+\Delta t)$. 
		By Lemma~\ref{lemma:dynamic_pe}, $\lVert\pen - \peo\rVert_{1}$ is presented as follows:
		\small
		\begin{align*}
		\pen - \peo =~& \sum_{k=0}^{\infty}c^k(\NATE+\DAE)^kc\DAE\peo\\ 
		&+(1-c)\sum_{k=0}^{\infty}c^k(\NATE+\DAE)^k\bed\\
		\lVert\pen - \peo\rVert_{1} \le~&\frac{c}{1-c}\lVert\DAE\rVert_{1}+\lVert\bed\rVert_{1}\\
		\lVert\peone\rVert_{1} =~&\frac{\lVert\pen - \peo\rVert_{1}}{\dt} \le \frac{1}{\dt}(\frac{c}{1-c}\lVert\DAE\rVert_{1}+\lVert\bed\rVert_{1})
		\end{align*}
		\normalsize
		$\lVert(\NATE+\DAE)^k\rVert_{1} = \lVert\peo\rVert_{1} = 1$ since $\NATE+\DAE$ is a column-normalized stochastic matrix and $\peo$ is a PageRank vector.
	\end{proof}

	\begin{proof}[Proof of Lemma~\ref{lemma:pe_two} (Upper bound of $\lVert\petwo\rVert_{1}$)] $\\$
		\label{proof:pe_two}
		For brevity, denote $\p_0 \leftarrow \pe(t-\Delta t), \p_1 \leftarrow \pe(t), \p_2 \leftarrow \pe(t+\Delta t), \Delta\p^o \leftarrow \p_1-\p_0, \Delta\p^n \leftarrow \p_2-\p_1, \A \leftarrow \NATE, \DA_1 \leftarrow \DAEO, \DA_2 \leftarrow \DAEN, \bd_1 \leftarrow \bedo$ and $\bd_2 \leftarrow \bedn$.
		In addition, we omit the $c$ term by substituting $\A \leftarrow c\A, \DA \leftarrow c\DA$ and $\bd \leftarrow (1-c)\bd$ during this proof.
		By Lemma~\ref{lemma:dynamic_pe}, $\Delta\p^o$ and $\Delta\p^n$ are presented as follows:
		\small
		\begin{align*}
		\Delta\p^o &= \sum_{k=0}^{\infty}(\A+\DA_1)^{k}\DA_1\p_0 + \sum_{k=0}^{\infty}(\A+\DA_1)^{k}\bd_1 \\
		\Delta\p^n &= \sum_{k=0}^{\infty}(\A+\DA_1+\DA_2)^{k}\DA_2\p_1 + \sum_{k=0}^{\infty}(\A+\DA_1 + \DA_2)^{k}\bd_2
		\end{align*}
		\normalsize
		Substracting the first term of $\Delta\p^o$ from the first term of $\Delta\p^n$ is equal to $\pstwo$ as shown in Lemma~\ref{lemma:ps_two}.
		Then $\Delta\p^n - \Delta\p^o$ is:
		\small
		\begin{align*}
		\Delta\p^n - \Delta\p^o = \pstwo +  \sum_{k=0}^{\infty}(\A+\DA_1 + \DA_2)^{k}\bd_2 - \sum_{k=0}^{\infty}(\A+\DA_1)^{k}\bd_1 
		\end{align*}
		\normalsize
		\vfill\eject 
		By substituting $\A +\DA_1$ with $Y_2$, the last two terms in the above equation are presented as follows:
		\small
		\begin{align*}
		&\sum_{k=0}^{\infty}(Y_2+\DA_2)^k\bd_2-\sum_{k=0}^{\infty}Y_2^k\bd_1\\
		&=\sum_{k=0}^{\infty}Y_2^k\bd_2 + \sum_{k=0}^{\infty}(Y_2+\DA_2)^i\DA_2\sum_{i=0}^{\infty}Y_2^k\bd_2 - \sum_{k=0}^{\infty}Y_2^k\bd_1\\
		&=\sum_{k=0}^{\infty}Y_2^k(\bd_2-\bd_1) + \sum_{k=0}^{\infty}(Y_2+\DA_2)^i\DA_2\sum_{i=0}^{\infty}Y_2^k\bd_2
		\end{align*}
		\normalsize
		In the first equation, we treat $\sum_{k=0}^{\infty}(Y_2+\DA_2)^k\bd_2$ as an updated PageRank with the update $\DA_2$ from an original PageRank $\sum_{k=0}^{\infty}Y_2^k\bd_2$, then apply Lemma~\ref{equ:update_ps}.
		Note that both $\lVert\sum_{k=0}^{\infty}Y_2^k\rVert_{1}$ and $\lVert\sum_{k=0}^{\infty}(Y_2+\DA_2)^k\rVert_{1}$ have value $\frac{1}{1-c}$ since the original expressions with $c$ terms are as follows:
		\small
		\begin{align*}
		\sum_{k=0}^{\infty}Y_2^k&= \sum_{k=0}^{\infty}c^k(\NATE+\DAEO)^k\\
		\sum_{k=0}^{\infty}(Y_2+\DA_2)^k&= \sum_{k=0}^{\infty}c^k(\NATE+\DAEO+\DAEN)^k
		\end{align*}
		\normalsize
		$(\NATE+\DAEO)$ and $(\NATE+\DAEO+\DAEN)$ are column-normalized stochastic matrices.
		Then, recovering $c$ from all terms, $\lVert\petwo\rVert_{1}$ is bounded as follow:
		\small
		\begin{align*}
		\frac{\lVert\Delta\p^n - \Delta\p^o\rVert_{1}}{\dt^2} &\le \lVert\pstwo\rVert_{max} \\
		&+ \frac{1}{\dt^2}(\lVert\bd_2-\bd_1 \rVert_{1} + \frac{c}{1-c}\lVert\DA_2\rVert_{1}\lVert\bd_2\rVert_{1})
		\end{align*}
		\normalsize
	\end{proof}

	\begin{proof}[Proof of Theorem~\ref{theorem:up_d1d2_e}] (Upper bounds of $\lVert\peone\rVert_{1}$ and $\lVert\petwo\rVert_{1}$ with \edge)
		\label{proof:up_d1d2_e}
		Use Lemma~\ref{lemma:pe_edge},~\ref{lemma:ps_two} and~\ref{lemma:pe_two}.
	\end{proof}

\end{document}